
\documentclass[final]{l4dc2026}
\usepackage{comment}
\usepackage{algpseudocode}

\usepackage[most]{tcolorbox}

\definecolor{paperred}{RGB}{140,30,30}
\definecolor{paperredbg}{RGB}{250,245,245}

\newcounter{algorithm}
\renewcommand{\thealgorithm}{\arabic{algorithm}}

\definecolor{alggray}{RGB}{240,240,240}

\newtcolorbox{algobox}[2][]{
colback=alggray,
colframe=black!60,
fonttitle=\bfseries,
title={Algorithm~\thealgorithm: #2},
boxrule=0.6pt,
arc=2pt,
#1
}


\usepackage{ifthen}
\newboolean{showcomments}
\setboolean{showcomments}{true}
\setlength {\marginparwidth }{2cm}
\usepackage{todonotes}

\definecolor{bleudefrance}{rgb}{0.19, 0.55, 0.91}
\definecolor{ao(english)}{rgb}{0.0, 0.9, 0.0}

\newcommand{\addcite}[0]{\ifthenelse{\boolean{showcomments}}
{\textcolor{purple}{(add cite(s)) }}{}}%

\newcommand{\emmargin}[1]{\ifthenelse{\boolean{showcomments}}{\marginpar{\color{bleudefrance}\tiny EM: #1}}{}}

\newcommand{\enrique}[1]{  \ifthenelse{\boolean{showcomments}}
{\todo[inline,caption={},color=bleudefrance]{Enrique: #1}}{}}
\newcommand{\mahyar}[1]{  \ifthenelse{\boolean{showcomments}}
{\todo[inline,caption={},color=orange]{Mahyar: #1}}{}}
\newcommand{\jared}[1]{  \ifthenelse{\boolean{showcomments}}
{\todo[inline,caption={},color=green]{Jared: #1}}{}}

\newcommand{\agustin}[1]{  \ifthenelse{\boolean{showcomments}}
{\todo[inline,caption={},color=purple!60!white]{Agustin: #1}}{}}

\newboolean{showedits}
\setboolean{showedits}{false}
\usepackage[markup=underlined,commandnameprefix=ifneeded]{changes}
\normalem
\definechangesauthor[color=bleudefrance]{EM}
\newcommand{\aem}[1]{
\ifthenelse{\boolean{showedits}}
{\added[id=EM]{#1}}
{\!#1\hspace{-4.75pt}}
}
\newcommand{\repem}[2]{
\ifthenelse{\boolean{showedits}}
{\replaced[id=EM]{#1}{#2}}
{\!#1\hspace{-4.75pt}}
}
\newcommand{\dem}[1]{
\ifthenelse{\boolean{showedits}}
{\deleted[id=EM]{#1}}
{}
}




\newboolean{revisionmode}
\setboolean{revisionmode}{false}
\newcommand{\revise}[1]{
\ifthenelse{\boolean{revisionmode}}
{{\color{red!70!black}#1}}
{#1}
}





\usepackage{bibentry}

\usepackage{enumitem}

\usepackage{subcaption}

\DeclareMathOperator*{\argmin}{arg\,min}

\newcommand{\bx}[0]{\mathbf{x}}
\newcommand{\bu}[0]{\mathbf{u}}
\newcommand{\by}[0]{\mathbf{y}}

\newcommand{\bj}[0]{\mathbf{J}}
\newcommand{\jub}[0]{J_{\operatorname{ub}}^\lambda}
\newcommand{\jlb}[0]{J_{\operatorname{lb}}^\lambda}

\makeatletter
\def\tagform@#1{\maketag@@@{\color{blue}(#1)}} 
\makeatother


\newtheorem{theorem}{Theorem}
\newtheorem{lemma}{Lemma}
\newtheorem{proposition}{Proposition}

\newtheorem{definition}{Definition}
\newtheorem{remark}{Remark}
\newtheorem{assumption}{Assumption}

\title[Data-driven acceleration of MPC]{Data-driven Acceleration of MPC with Guarantees}
\usepackage{times}



\author{%
 \Name{Agustin Castellano} \Email{acaste11@jhu.edu}
 \AND
 \Name{Shijie Pan} \Email{span34@jhu.edu}
 \AND
 \Name{Enrique Mallada} \Email{mallada@jhu.edu}~~\\
 \addr Dept. of Electrical and Computer Engineering\\
 Johns Hopkins University%
}

\begin{document}
\setlist[enumerate]{label=\textcolor{blue}{\arabic*.}}

\maketitle

\begin{abstract}%
 Model Predictive Control (MPC) is a powerful framework for optimal control but can be too slow for low-latency applications. We present a data-driven framework to accelerate MPC by replacing online optimization with a nonparametric policy constructed from offline MPC solutions. Our policy is greedy with respect to a constructed upper bound on the optimal cost-to-go, and can be implemented as a nonparametric lookup rule that is orders of magnitude faster than solving MPC online. Our analysis shows that under sufficient coverage conditions of the offline data, the policy is recursively feasible and admits provable, bounded optimality gap. These conditions establish an explicit trade-off between the amount of data collected and the tightness of the bounds.\revise{~New solutions can be incorporated straightforwardly without the need for retraining, enabling continual improvement.}Our experiments show that this policy is between $100$ and $1000$ times faster than standard MPC, with only a modest hit to optimality, showing  potential for real-time control tasks.
\end{abstract}

\begin{keywords}%
  Explicit MPC, Approximate MPC, Nonlinear systems, Nonparametric methods.
\end{keywords}

\section{Introduction}



Model Predictive Control (MPC) is a powerful framework for optimal control of constrained, high-dimensional dynamical systems \citep{mayne2014model}. Born from the process control industry in the late 1970s \citep{richalet1978model, cutler1980dynamic}, it has since matured and been applied to myriad of different industries and applications, including aerospace \citep{di2018real}, automotive \citep{hrovat2012development}, thermal control in buildings \citep{drgovna2020all}, and more \citep{forbes2015model}.
At the core of MPC is the solution approach of iteratively solving a receding-horizon constrained optimization problem \citep{mayne2000constrained}. Trade-offs between the problem horizon and the quality of the solutions have been extensively studied \citep{grune2010analysis,reble2012unconstrained,worthmann2012estimates}. Notwithstanding, one of the core challenges of MPC is how to get \emph{fast, high-quality} controls in an online setting.

Ways of speeding up MPC have been explored for decades \citep{garcia1989model}. In the case of linear systems with quadratic costs and polytopic constraints, it is a well-known fact that the optimal controller is piecewise affine \citep{bemporad2002explicit}. A substantial body of work leverages this fact in \emph{explicit} MPC \citep{alessio2009survey}, where one seeks to learn the optimal controller for each polyhedral region, by combining neural networks with projection schemes \citep{chen2018approximating} or with multiparametric quadratic programming \citep{maddalena2020neural}.  
\revise{These projection schemes, however, are costly and require solving additional optimization problems. Other works enable explicit (nonlinear) MPC via neural networks \citep{hertneck2018learning}, or by nonparametric methods, for example by using set membership approximation \citep{canale2009set}, kernel regression \citep{carnerero2023kernel, huang2023robust}, quasi-interpolation \citep{ganguly2025explicit}, nonlinear piecewise approximations \citep{trinh2016explicit} and {tube} MPC \citep{bayer2016tube}. 
One drawback of neural network methods \citep{chen2018approximating, hertneck2018learning} is that they are difficult to adapt to new data, often requiring retraining from scratch. Some of the nonparametric methods cited \citep{canale2009set} only guarantee stability/performance \emph{asymptotically}, while others \citep{carnerero2023kernel, trinh2016explicit} provide no guarantees or rely on too strong assumptions on the system dynamics \citep{bayer2016tube}.}


\revise{In contrast with the afforementioned literature, we propose a novel framework that uses offline solutions to derive a nonparametric policy with \textbf{non-asymptotic performance and feasibility guarantees}. New data modifies our policy locally, enabling \textbf{continual improvement}: new data is incorporated with no retraining needed and without appreciable performance degradation.}
Specifically, we make the following contributions:
\vspace{-6pt}
\begin{enumerate}[itemsep=0pt]
    \item We present a novel nonparametric policy that approximately solves MPC problems. This policy is built with offline data from a more constrained MPC solution.
    \item \revise{This policy can learn continually, with new solutions added without any retraining.}
    \item We establish conditions on the offline data that ensure recursive feasibility and bounds on the optimality gap over the whole domain.
    \item Empirically, we show that this policy can be implemented efficiently \revise{as a lookup rule on a CPU}, and during inference is orders of magnitude faster than online MPC. 
\end{enumerate}
The rest of the paper is organized as follows. \textbf{Section} \ref{sec:preliminaries} reviews standard MPC. \textbf{Section} \ref{sec:conservative-problem} presents the conservative problem we solve. \textbf{Section} \ref{sec:nonparametric-policy} defines the data-driven nonparametric policy, and establishes sufficient data-coverage conditions that guarantee feasibility and a desired performance. We present two algorithms in \textbf{Section} \ref{sec:algorithms}: one based on stochastic sampling (Algorithm \ref{alg:1}) and another one based on sequential splitting of the state-space domain (Algorithm \ref{alg:verification}), that upon termination provide guarantees of recursive feasibility and suboptimality.
Experiments in \textbf{Section} \ref{sec:experiments} show our verification algorithm in action and empirically contrast trade-offs between performance and controller latency for our method versus standard MPC.


%
\section{Preliminaries}\label{sec:preliminaries}
We are interested in solving the following problem:
\begin{subequations}
\label{eq:og-prob}
\begin{align}
    J(\bx_0) \triangleq  \min_{\bu_{0:T-1}}&\sum_{t=0}^{T-1}\gamma^t c(\bx_t, \bu_t) + F(\bx_T) \label{eq:og-prob-a}\\ 
    \text{subject to:}~~& \bx_{t+1}=f(\bx_t, \bu_t),& t=0,\ldots,T-1\label{eq:og-prob-b}\\
    &\bx_t\in\mathbb{X}, & t=1,\ldots,T-1\label{eq:og-prob-c}\\
    & \bu_t\in\mathbb{U}, & t=0,\ldots,T-1\label{eq:og-prob-d} 
\end{align}
\end{subequations}
where $\bu_{0:T-1}\triangleq\left[\bu_0~\bu_1\ldots \bu_{T-1}\right]$ is the sequence of controls, the problem horizon satisfies $1\leq T\leq \infty$, and $\gamma\in(0, 1]$ is a discount factor. Stage costs $c(\cdot,\cdot)$ are nonnegative, the feasible sets are {compact} and satisfy $\mathbb{X}\subseteq\mathbb{R}^n$, $\mathbb{U}\subseteq\mathbb{R}^m$. Terminal state constraints (if any) are encoded via $F:\mathbb{X}\to{\mathbb{R}_{\geq 0}}\cup\{+\infty\}$.
Without loss of generality, we assume that the origin is an equilibrium point of $f$, i.e. $\mathbf{0} = f(\mathbf{0},\mathbf{0}) \in \mathbb{X}\;$. Further, we make the following additional assumptions. 
\begin{assumption}[Lipschitz dynamics] \label{assn:lipschitz-dynamics}
The map $f$ is $L_f-$Lipschitz in $\bx$ and $L_u-$Lipschitz in $\bu$:
    \begin{equation}
       \revise{\left\| f(\bx, \bu) - f(\bx', \bu') \right\|_{\mathbb{X}}} \leq L_f  \left\|\bx - \bx'\right\|_{\mathbb{X}} +  L_u\left\|\bu - \bu'\right\|_{\mathbb{U}}\quad\forall \bx, \bx' \in \mathbb{X},~~\forall \bu, \bu' \in \mathbb{U}\;,
    \end{equation}
    where $\|\cdot\|_{\mathbb{X}}$ and $\|\cdot\|_{\mathbb{U}}$ are appropriate norms on $\mathbb{X}$ and $\mathbb{U}$.
\end{assumption}
We omit the dependence of $\|\cdot\|_\square$ on $\mathbb{X}$ and $\mathbb{U}$ when it is clear from context.
\begin{assumption}[Stationarity of optimal solutions]\label{assn:stationarity}
    The optimal policy $\pi^\star:\mathbb{X}\to\mathbb{U}$ for \eqref{eq:og-prob} is stationary.
\end{assumption}%
It follows from Assumption \ref{assn:stationarity} that the optimal cost-to-go $J(\cdot)$ is also stationary, and satisfies the Bellman Equation \citep{bellman1954theory, bertsekas2012dynamic}:
\begin{equation}
    J(\bx) = c\left(\bx, \pi^\star(\bx)\right) + \gamma \cdot J\left(f(\bx,\pi^\star(\bx)\right). \label{eq:bellman}
\end{equation}%
The preceding assumption is not overly restrictive: it captures (among other cases) infinite-horizon LQR control \citep[Ch. 3]{bertsekas2012dynamic} and shortest path problems \citep[Ch. 2]{bertsekas2012dynamic}.

\subsection*{Typical MPC solution scheme {and limitations}}
In Model Predictive Control, the standard approach is to solve \eqref{eq:og-prob} over a smaller horizon $H \ll T$, obtain the optimal sequence of controls $\bu_0,\ldots,\bu_{H-1}$, apply only the first one ($\bu_0$) to the system. Then, this procedure is repeated from the new state. This approach is also called receding horizon control \citep{mattingley2011receding, rawlings2002stability}. It provides a trade-off between computation time (smaller $H$) at the expense of solution quality (larger $H$). It always produces a stationary controller \citep[Section 3.8.1.]{mayne2000constrained}, although one of its main drawbacks is that, by its own, it does not guarantee recursive feasibility \citep{lofberg2012oops}. It is common in the MPC literature to add a terminal constraint $\bx_H\in \mathbb{X}_H$, with $\mathbb{X}_H$ being a constraint-satisfying control invariant set \citep{chen1998quasi,mayne2000constrained}, or to carefully design a terminal cost (like $F(\cdot)$ in \eqref{eq:og-prob-a}) that yields recursive feasibility (e.g. using a Control Lyapunov function as $F(\cdot)$ \citep{jadbabaie2002unconstrained}). However, computing control invariant sets or Lyapunov functions for general non-linear systems is a hard problem in and of itself \citep{blanchini1999set}.

In this work, we overcome these limitations by using offline, precomputed solutions to a slightly conservative version of \eqref{eq:og-prob}. These solutions are used to define our data-driven policy, which will enjoy---by design and with sufficient data---recursive feasibility. 

\begin{figure}[t]
    \centering
    \includegraphics[width=\linewidth]{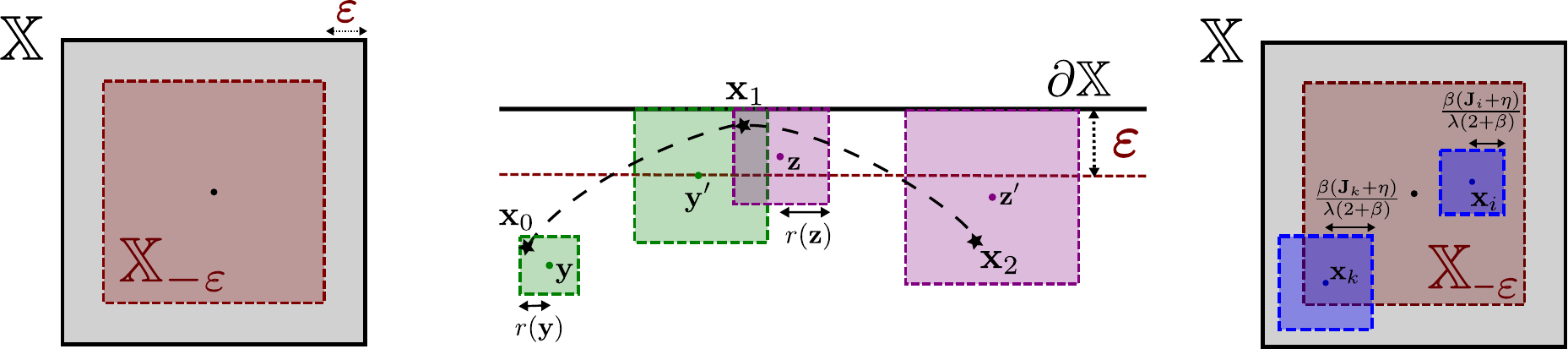}
    \caption{Left: Original constraint set $\mathbb{X}$ and its erosion $\mathbb{X}_{-\varepsilon}$. Middle: Feasibility certificates under our framework. The trajectory $\left(\bx_0, \bx_1, \bx_2, \ldots\right)$ marked with `$\star$' is produced by our policy. Optimal transitions $\mathbf{y}\overset{\pi^\star}{\to}\mathbf{y}'$ and $\mathbf{z}\overset{\pi^\star}{\to}\mathbf{z}'$ are precomputed offline (by solving \eqref{eq:conservative-prob}) and stored in a dataset $\mathcal{D}$. The control associated with each state (e.g. $\mathbf{y}$) in the dataset is also feasible in a neighborhood of that point (the ball with radius $r(\mathbf{y})$, see Prop. \ref{prop:local-feasibility}). Right: Performance guarantees for our policy (Theorem \ref{thm:performance-guarantees}). Each triplet $(\bx_i,\bu_i,\bj_i)$ in the dataset certifies a ball $\mathbb{B}\left(\bx_i, \tfrac{\beta\left(\bj_i+\eta\right)}{\lambda\left(2+\beta\right)}\right)$, wherein $\tfrac{J^\pi(\bx)-J(\bx,\varepsilon)}{J(\bx,\varepsilon)+\eta} \leq \beta$ for any $\bx$ in that ball.} 
    \label{fig:2in1}
\end{figure}

\section{Offline solution strategy} \label{sec:conservative-problem}
During the offline phase, we collect data by solving a more conservative version of Problem \eqref{eq:og-prob}, which we describe now. %
Let $\mathbb{B}_{\varepsilon}$ be the $\varepsilon$-ball in $\mathbb{R}^n$ centered at the origin. We define the \emph{erosion} of $\mathbb{X}$ at level $\varepsilon$ as:
\begin{equation*}
\mathbb{X}_{-\varepsilon}\triangleq \mathbb{X}\ominus\mathbb{B}_{\varepsilon} = \left\{\bx\in\mathbb{R}^n: \bx + \mathbb{B}_{\varepsilon}\subseteq\mathbb{X}\right\}\;,
\end{equation*}
where $\ominus$ denotes Pontryagin/Minkowski difference \citep[Ch. 3]{blanchini2008set} (see Figure \ref{fig:2in1}). Consider the following generalization of problem \eqref{eq:og-prob}:
\begin{subequations}
\label{eq:conservative-prob}
\begin{align}
    J(\bx_0, \varepsilon) \triangleq  \min_{\bu_{0:T-1}}&\eqref{eq:og-prob-a}\\ 
    \text{subject to:}~~& \eqref{eq:og-prob-b}, \eqref{eq:og-prob-d}, {\bx_t\in\mathbb{X}_{-\varepsilon}}, & t=1,\ldots,T-1  \label{eq:conservative-prob-xt}
\end{align}
\end{subequations}
The cost-to-go $J(\cdot,\cdot)$ above is a mapping from $\mathbb{X}\times\mathbb{R}_{\geq 0}$ to $\mathbb{R}_{\geq 0}\cup\{+\infty\}$. Note that $J(\cdot, 0)$ reduces to $\eqref{eq:og-prob}$. We make two additional assumptions.
\begin{assumption}[Shrunk problem is feasible]\label{assn:shrunk-feasible}
    There exists a small, positive $\varepsilon$ such that Problem \eqref{eq:conservative-prob} is feasible for all $\bx_0\in\mathbb{X}$.
\end{assumption}
We highlight that the assumption above is for any $\bx_0\in\mathbb{X}$, and not for any $\bx_0\in\mathbb{X}_{-\varepsilon}$. This means initial conditions $\bx_0\in\mathbb{X}\setminus\mathbb{X}_{-\varepsilon}$ are required to ``jump in'' to $\mathbb{X}_{-\varepsilon}$ in one step. The trajectory $\mathbf{z}\to\mathbf{z}'$ in Figure \ref{fig:2in1} is an example of this behavior.\footnote{\revise{This assumption can be relaxed by enforcing that the system enters $\mathbb{X}_{-\varepsilon}$ after at most $K$ steps, i.e. changing \eqref{eq:conservative-prob-xt} to $\bx_t\in\mathbb{X}_{-\varepsilon}~\forall t\geq K$, for some $K:1\leq K <T$. Our results can be easily generalized to the case $K>1$, which is related to notions of \emph{control recurrence} \citep{shen2022model, sibai2026recurrence}. For ease of exposition we focus on the case $K=1$ and defer the general one for future work.}}
\begin{assumption}[Cost-to-go of conservative problem is  Lipschitz]\label{assn:J-locally-lipschitz}
There exists $L>0$ and $L_J>0$ such that for any $\varepsilon$ satisfying the assumption above, we have:
\vspace{-5pt}
    \begin{enumerate}
    [itemsep=0pt]
    \item\label{assn:J-lip-1} 
    $
    J(\bx_0, \varepsilon)-J(\bx_0, 0) \leq L\cdot \varepsilon\;\quad \forall~\bx_0\in\mathbb{X}\;.
    $
    \item \label{assn:J-lip-2} 
    $
    J(\bx_0,\varepsilon)-J(\bx_0',\varepsilon) \leq L_J \|\bx_0-\bx_0'\|\quad \forall \bx_0,\bx_0'\in\mathbb{X}\;.
    $
    \end{enumerate}
\end{assumption}
Assumption \ref{assn:J-locally-lipschitz}.\ref{assn:J-lip-1} and connections as to whether the map $\left(\bx_0,\varepsilon\right)\mapsto \bu_{0:T-1}^\star$ is locally Lipschitz are concepts related to perturbation analysis of optimization problems \citep{rockafellar1998variational} and notions of \emph{strong stability} of solutions, see e.g. Section 5 in \cite{bonnans1998optimization}.
The middle panel in Figure \ref{fig:2in1} shows trajectories under our policy: if the dataset $\mathcal{D}$ has optimal transitions coming from \eqref{eq:conservative-prob} (i.e. satisfying $\bx_t\in\mathbb{X}_{-\varepsilon},\forall t\geq1$), then our policy is guaranteed feasible for \eqref{eq:og-prob} (i.e. $\bx_t\in\mathbb{X},~\forall t\geq 1$). Assumption \ref{assn:J-locally-lipschitz}.\ref{assn:J-lip-2} asks for the optimal cost-to-go to be Lipschitz continuous in the state variable. This  is not overly restrictive, as shown below:
\begin{proposition}[Sufficient conditions for Assumption \ref{assn:J-locally-lipschitz}.\ref{assn:J-lip-2} Lemma 3 in \cite{bucsoniu2018continuous}]
    \\Suppose Assumption \ref{assn:lipschitz-dynamics} holds, the stage cost $c(\cdot,\cdot)$ is $L_c$-Lipschitz and $\gamma \max\{L_f,L_u\} < 1$. Then Assumption \ref{assn:J-locally-lipschitz}.\ref{assn:J-lip-2} holds with
    $
    L_J \leq \frac{L_c}{1-\gamma \max\{L_f,L_u\}}\;.
    $
\end{proposition}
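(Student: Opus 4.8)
The plan is to fix the erosion level $\varepsilon$, treat \eqref{eq:conservative-prob} as a stationary constrained discounted problem, and exploit Assumption \ref{assn:stationarity}: the cost-to-go $J(\cdot,\varepsilon)$ then solves a Bellman equation of the form \eqref{eq:bellman}, i.e. it is the fixed point of the dynamic-programming operator
\[
(\mathcal{T}V)(\bx) = \min_{\bu}\left[\, c(\bx,\bu) + \gamma\, V\!\left(f(\bx,\bu)\right)\right],
\]
where the minimum ranges over controls $\bu\in\mathbb{U}$ whose successor $f(\bx,\bu)$ is admissible at level $\varepsilon$. The entire argument reduces to one claim: \emph{$\mathcal{T}$ contracts Lipschitz constants}, meaning that if $V$ is $\ell$-Lipschitz on $\mathbb{X}$, then $\mathcal{T}V$ is $(L_c+\gamma L_f\,\ell)$-Lipschitz. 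Since the resulting bound will not depend on $\varepsilon$, this simultaneously handles all $\varepsilon$ in Assumption \ref{assn:shrunk-feasible}.

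To establish the one-step estimate I would fix $\bx,\bx'\in\mathbb{X}$, let $\bu'$ be the minimizer defining $(\mathcal{T}V)(\bx')$, and use $\bu'$ as a competitor at $\bx$. This yields
\[
(\mathcal{T}V)(\bx)-(\mathcal{T}V)(\bx') \le \left[\,c(\bx,\bu')-c(\bx',\bu')\right] + \gamma\left[V\!\left(f(\bx,\bu')\right)-V\!\left(f(\bx',\bu')\right)\right].
\]
I would bound the first bracket by $L_c\|\bx-\bx'\|$ (stage cost Lipschitz in $\bx$, with $\bu'$ held fixed) and the second by $\gamma\,\ell\,\|f(\bx,\bu')-f(\bx',\bu')\|\le \gamma\,\ell\,L_f\|\bx-\bx'\|$ using the $\bx$-part of Assumption \ref{assn:lipschitz-dynamics}. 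Swapping the roles of $\bx$ and $\bx'$ gives the symmetric inequality, hence the claimed $(L_c+\gamma L_f\,\ell)$-Lipschitz constant for $\mathcal{T}V$.

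With the one-step estimate in hand, I would run value iteration from $V_0\equiv 0$: the iterates $V_{k+1}=\mathcal{T}V_k$ converge to $J(\cdot,\varepsilon)$ (standard for discounted problems with nonnegative $c$), while their Lipschitz constants satisfy the affine recursion $\ell_{k+1}=L_c+\gamma L_f\,\ell_k$ with $\ell_0=0$. Because $\gamma L_f\le \gamma\max\{L_f,L_u\}<1$, this recursion is a contraction and increases monotonically to its fixed point $\ell^\star = L_c/(1-\gamma L_f)$; passing to the limit shows $J(\cdot,\varepsilon)$ is $\ell^\star$-Lipschitz. Finally $L_J\le \tfrac{L_c}{1-\gamma L_f}\le \tfrac{L_c}{1-\gamma\max\{L_f,L_u\}}$, where the last step only weakens the constant since $L_f\le\max\{L_f,L_u\}$, matching the stated bound.

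The main obstacle is the \emph{feasibility of the competitor control}: the minimizer $\bu'$ at $\bx'$ need not keep $\bx$'s successor inside the eroded admissible set, so the defining inequality $(\mathcal{T}V)(\bx)\le c(\bx,\bu')+\gamma V(f(\bx,\bu'))$ can fail. Making this rigorous requires either (i) showing the admissible-control correspondence varies regularly enough that a feasible $\bu''$ near $\bu'$ exists with $\|\bu''-\bu'\|\lesssim\|\bx-\bx'\|$, in which case the extra stage-cost and dynamics terms are absorbed through $L_c$ and $L_u$---this is precisely where $L_u$, and hence $\max\{L_f,L_u\}$, naturally enters the bound---or (ii) restricting to a regime (control-only constraints, or a feasibility margin left by the erosion) where $\bu'$ is automatically admissible at $\bx$. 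I would make this step precise following \cite{bucsoniu2018continuous}, which carries exactly this technical content.
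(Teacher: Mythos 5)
The paper gives no proof of this proposition at all: it simply imports Lemma~3 of \cite{bucsoniu2018continuous}, which is a statement about \emph{unconstrained} discounted problems. Your value-iteration argument is exactly the standard proof of that lemma (competitor control, $\ell_{k+1}=L_c+\gamma L_f\ell_k$, limit $L_c/(1-\gamma L_f)$), and in the unconstrained setting it is correct; the final weakening to $L_c/(1-\gamma\max\{L_f,L_u\})$ is harmless. (The reason $L_u$ appears in the cited bound is that the original lemma controls Lipschitzness jointly in $(\bx,\bu)$, i.e.\ for the $Q$-function; your state-only version legitimately yields the sharper constant.)

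The genuine gap is the one you flag yourself in the last paragraph and then leave open: for the eroded problem \eqref{eq:conservative-prob} the admissible controls at $\bx$ are those with $f(\bx,\bu)\in\mathbb{X}_{-\varepsilon}$, so the minimizer $\bu'$ at $\bx'$ need not be admissible at $\bx$ and the inequality $(\mathcal{T}V)(\bx)\le c(\bx,\bu')+\gamma V(f(\bx,\bu'))$ is not available. This is not a cosmetic issue: value functions of state-constrained problems are generically \emph{not} Lipschitz near the boundary of the feasible region (they can blow up), so some additional structure --- a controllability/margin condition ensuring a nearby feasible $\bu''$ with $\|\bu''-\bu'\|\lesssim\|\bx-\bx'\|$, or Assumption~\ref{assn:shrunk-feasible} exploited quantitatively --- is indispensable, and deferring it wholesale to \cite{bucsoniu2018continuous} does not close it, because that lemma does not treat state constraints. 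To be fair, the paper's own invocation of the lemma suffers from the same mismatch; but as a proof of the proposition as stated (for $J(\cdot,\varepsilon)$ with $\varepsilon>0$), your argument is incomplete at precisely this step.
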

\section{Nonparametric policy}\label{sec:nonparametric-policy}
In this work we propose a nonparametric policy based on \emph{offline} (i.e. precomputed) solutions to \eqref{eq:conservative-prob}. Optimal tuples $\left(\bx_i, \bu_i, \bj_i\right)$ from the \emph{conservative} problem \eqref{eq:conservative-prob} are stored in a dataset $\mathcal{D}$:
\begin{equation}
    \mathcal{D} \triangleq \left\{\left(\bx_i, \bu_i, \bj_i\right)\right\}_{i} \quad\text{where~} \bu_i=\pi^\star(\bx_i), ~\bj_i = J(\bx_i,\varepsilon)\;. \label{eq:dataset}
\end{equation}    
We implement a policy that takes a ``close'' action in the dataset, as defined next.
\begin{definition}[Nonparametric policy] \label{def:npp} Given a dataset $\mathcal{D}$ as in \eqref{eq:dataset} and a parameter $\lambda > 0$, define:
    $$
    \pi_{\mathcal{D}}:\mathbb{X}\to\mathbb{U}\quad \pi_{\mathcal{D}}(\bx) = \bu_\iota, \text{~where}~~ \iota = \argmin_{1\leq i \leq |\mathcal{D}|}\big\{\bj_i + \lambda \cdot \|\bx-\bx_i\|\big\}\;.
    $$
\end{definition}
A precise value for $\lambda$ will be given later. This policy can be thought of as the one-nearest-neighbor regressor \citep[Ch. 13]{hastie2009elements} based on the data $\left\{(\bx_i,\bu_i)\right\}_i$ from $\mathcal{D}$, with a regularization factor $\frac{1}{\lambda}\bj_i$. This policy will \emph{accelerate} MPC because inference will be done at a much lower latency (details deferred until Section \ref{sec:experiments}), with only a modest hit on performance. 
\begin{remark}[Policy is built from the conservative problem]
    We highlight that the dataset $\mathcal{D}$ defined above, and hence the policy, come from solving offline the \emph{conservative} problem \eqref{eq:conservative-prob}, and not the original one \eqref{eq:og-prob}. Requiring the states $\bx_t$ to be in $\mathbb{X}_{-\varepsilon}$ for all $t>0$ will allow us to guarantee recursive feasibility of the nonparametric policy. Even though we solve a more conservative problem, we do so for the full horizon $T$ (instead of using the lookahead horizon $H$).
\end{remark}
\revise{We study conditions on $\pi_{\mathcal{D}}$ for \emph{(i)} recursive feasibility and \emph{(ii)} bounded suboptimality next.}
\subsection{Guaranteeing recursive feasibility}
The main idea to establish recursive feasibility of our policy is by leveraging the fact that we are solving the \emph{conservative} problem over \eqref{eq:conservative-prob} $\mathbb{X}_{-\varepsilon}$, meaning optimal trajectories are separated from the boundary $\partial\mathbb{X}$. We establish this condition after the following definition. 


\begin{definition}[One-step feasibility]
\label{dfn:7}
    $(\bx, \bu)$ is {one-step feasible} with respect to \eqref{eq:conservative-prob} (respectively, to \eqref{eq:og-prob}) if $\bx\in\mathbb{X}, \bu\in\mathbb{U}\text{~and~}f(\bx,\bu)\in\mathbb{X}_{-\varepsilon}$~(resp. $f(\bx,\bu)\in\mathbb{X}$).
\end{definition}

\begin{proposition}[Local feasibility] \label{prop:local-feasibility} If $(\bx,\bu)$ is one-step feasible w.r.t. \eqref{eq:conservative-prob} and $\bx'=f(\bx,\bu)$, then $(\bx_0,\bu)$ is one-step feasible w.r.t. \eqref{eq:og-prob} for all $\bx_0 \in \mathbb{B}\big(\bx, r(\bx)\big)\cap\mathbb{X}$, where:
    \begin{equation}
                r(\bx) \triangleq  \frac{\operatorname{dist}\left(f(\bx, \bu),\partial\mathbb{X}\right)}{L_f} \geq \frac{\varepsilon}{L_f} \label{eq:feasibility-radius}
    \end{equation}
\end{proposition}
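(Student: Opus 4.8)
The plan is to reduce the claim to a single set containment, namely $f(\bx_0,\bu)\in\mathbb{X}$, and then certify it by propagating the state perturbation through the dynamics and invoking a distance-to-boundary argument. Recall that one-step feasibility of $(\bx_0,\bu)$ with respect to \eqref{eq:og-prob} requires three things: $\bx_0\in\mathbb{X}$, $\bu\in\mathbb{U}$, and $f(\bx_0,\bu)\in\mathbb{X}$. The first holds by hypothesis since $\bx_0\in\mathbb{B}\big(\bx,r(\bx)\big)\cap\mathbb{X}$, and the second holds because we reuse the very same control $\bu$, which is admissible since $(\bx,\bu)$ is one-step feasible with respect to \eqref{eq:conservative-prob}. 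Thus the entire content of the proposition is the third containment.

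To establish $f(\bx_0,\bu)\in\mathbb{X}$, I would hold $\bu$ fixed and apply the Lipschitz bound of Assumption \ref{assn:lipschitz-dynamics} with identical controls, which annihilates the $L_u$ term and yields $\|f(\bx_0,\bu)-\bx'\|\le L_f\|\bx_0-\bx\|$, where $\bx'=f(\bx,\bu)$. Since $\bx_0\in\mathbb{B}\big(\bx,r(\bx)\big)$ we have $\|\bx_0-\bx\|\le r(\bx)$, and substituting the definition \eqref{eq:feasibility-radius} of $r(\bx)$ gives $\|f(\bx_0,\bu)-\bx'\|\le L_f\, r(\bx)=\operatorname{dist}(\bx',\partial\mathbb{X})$. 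In words, the perturbed next state cannot be pushed farther from $\bx'$ than the distance from $\bx'$ to the boundary of $\mathbb{X}$.

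The remaining, and most delicate, step is the purely geometric fact that any point lying within $\operatorname{dist}(\bx',\partial\mathbb{X})$ of the interior point $\bx'$ stays inside $\mathbb{X}$; this is where compactness (hence closedness) of $\mathbb{X}$ is essential, together with the fact that $\bx'\in\mathbb{X}_{-\varepsilon}$ forces $\bx'$ into the interior. I would argue by contradiction via connectedness of a line segment: if $f(\bx_0,\bu)\notin\mathbb{X}$, the segment joining the interior point $\bx'$ to $f(\bx_0,\bu)$ must meet $\partial\mathbb{X}$ at some point $w$ strictly between the endpoints, so $\|w-\bx'\|<\|f(\bx_0,\bu)-\bx'\|\le\operatorname{dist}(\bx',\partial\mathbb{X})$, contradicting $\operatorname{dist}(\bx',\partial\mathbb{X})\le\|w-\bx'\|$. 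Hence $f(\bx_0,\bu)\in\mathbb{X}$, which completes the one-step feasibility with respect to \eqref{eq:og-prob}.

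Finally, the lower bound $r(\bx)\ge \varepsilon/L_f$ in \eqref{eq:feasibility-radius} is immediate: one-step feasibility of $(\bx,\bu)$ with respect to the conservative problem \eqref{eq:conservative-prob} means $\bx'=f(\bx,\bu)\in\mathbb{X}_{-\varepsilon}$, i.e.\ $\bx'+\mathbb{B}_{\varepsilon}\subseteq\mathbb{X}$ by the definition of the erosion, so $\operatorname{dist}(\bx',\partial\mathbb{X})\ge\varepsilon$ and dividing by $L_f$ finishes the estimate. The only genuinely subtle point worth writing out carefully is the connectedness/closedness argument of the third paragraph; the rest is a one-line Lipschitz propagation and an unfolding of definitions.
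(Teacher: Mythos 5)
Your proof is correct, but it takes a genuinely different route from the paper's. The paper proves this proposition via Proposition \ref{prop:one-step-feasibility} in the appendix: it introduces the radius $R$ of the largest origin-centered ball inscribed in $\mathbb{X}$, bounds $\|f(\bx_0,\bu)\| \le \|\bx'\| + L_f r$ using the reverse triangle inequality, and certifies feasibility by forcing the successor state into $\mathbb{B}(\mathbf{0},R)\subseteq\mathbb{X}$; this yields the radius $(R-\|\bx'\|)/L_f$, which the paper then only \emph{upper} bounds by $\operatorname{dist}(\bx',\partial\mathbb{X})/L_f$. You instead work directly with $\operatorname{dist}(\bx',\partial\mathbb{X})$: the Lipschitz propagation $\|f(\bx_0,\bu)-\bx'\|\le L_f\|\bx_0-\bx\|$ plus the segment-crossing argument (any point within $\operatorname{dist}(\bx',\partial\mathbb{X})$ of the interior point $\bx'$ stays in the closed set $\mathbb{X}$) certifies the \emph{full} radius $r(\bx)$ as stated in \eqref{eq:feasibility-radius}. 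This buys you two things. First, your argument does not rely on the origin-equilibrium/inscribed-ball structure, so it applies verbatim to points $\bx'$ near the boundary but far from the origin-centered ball, where $(R-\|\bx'\|)/L_f$ can even be vacuous (nonpositive) while $\operatorname{dist}(\bx',\partial\mathbb{X})/L_f\ge\varepsilon/L_f$ is not. Second, and more importantly, the paper's appendix proof as written only establishes feasibility on the \emph{smaller} ball of radius $(R-\|\bx'\|)/L_f$ --- the inequality $(R-\|\bx'\|)/L_f \le \operatorname{dist}(\bx',\partial\mathbb{X})/L_f$ goes the wrong way to conclude the main-text claim --- so your direct geometric argument is actually the one that fully substantiates the radius asserted in Proposition \ref{prop:local-feasibility}. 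Your handling of the lower bound $r(\bx)\ge\varepsilon/L_f$ via $\bx'+\mathbb{B}_\varepsilon\subseteq\mathbb{X}\Rightarrow\operatorname{dist}(\bx',\partial\mathbb{X})\ge\varepsilon$ is also correct and matches the intent of the paper.
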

The proof of the proposition above is in Appendix \ref{proof:one-step-feasibility}, and relies on the Lipschitzness of $f(\cdot,\cdot)$ (Assumption \ref{assn:lipschitz-dynamics}). The inequality in \eqref{eq:feasibility-radius} comes from the assumption that the conservative problem \eqref{eq:conservative-prob} is feasible over $\bx\in\mathbb{X}$. This proposition can be visualized in the middle panel of Figure \ref{fig:2in1}: trajectories under the same $\bu$ cannot go too far apart in one step. Conditions for recursive feasibility over the whole domain follow naturally.
\begin{proposition}[Recursive feasibility for our policy]\label{prop:recursive-feasibility} If any of the following conditions hold, then $\pi_{\mathcal{D}}$ is recursively feasible for \eqref{eq:og-prob} over $\mathbb{X}$:

\begin{enumerate}
    \item \label{prop:recursive-feasibility-1} The states $\left\{\bx_i\right\}_{i}$ in $\mathcal{D}$ forms an $\tfrac{\varepsilon}{L_f}$-cover of $\mathbb{X}$\footnote{A collection of points $\{w_i\}_{i=1,2,...,|\mathcal{D}|}$ forms an $r$-cover of a normed space $\left(\mathbb{W}, \|\cdot\|\right)$ if $\mathbb{W}\subseteq\bigcup_i \mathbb{B}(w_i, r)$.}.
    \item \label{prop:recursive-feasibility-2} $\bigcup_{i=1}^{|\mathcal{D}|}\mathbb{B}\left(\bx_i, r(\bx_i)\right) \supseteq \mathbb{X}$, where $r(\bx_i) = \frac{\operatorname{dist}\left(f(\bx_i, \bu_i),\partial\mathbb{X}\right)}{L_f}$
\end{enumerate}

\end{proposition}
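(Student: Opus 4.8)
The plan is to reduce recursive feasibility to a single-step feasibility statement and then iterate. By definition, $\pi_{\mathcal{D}}$ is recursively feasible over $\mathbb{X}$ exactly when, starting from any $\bx_0\in\mathbb{X}$, the closed-loop trajectory $\bx_{t+1}=f\big(\bx_t,\pi_{\mathcal{D}}(\bx_t)\big)$ keeps $\bx_t\in\mathbb{X}$ and $\pi_{\mathcal{D}}(\bx_t)\in\mathbb{U}$ for all $t$. Since the constraint set is time-invariant, it suffices to prove the single-step claim: for every $\bx\in\mathbb{X}$, the control $\bu_\iota=\pi_{\mathcal{D}}(\bx)$ satisfies $\bu_\iota\in\mathbb{U}$ and $f(\bx,\bu_\iota)\in\mathbb{X}$. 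Membership $\bu_\iota\in\mathbb{U}$ is automatic because every stored $\bu_i=\pi^\star(\bx_i)$ comes from a feasible solution of \eqref{eq:conservative-prob}. Recursive feasibility then follows by a one-line induction on $t$: the base case is $\bx_0\in\mathbb{X}$, and the inductive step applies the single-step claim at $\bx_t\in\mathbb{X}$ to conclude $\bx_{t+1}\in\mathbb{X}$, which is again covered so the argument reapplies.

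I would first dispatch the implication \ref{prop:recursive-feasibility-1}$\Rightarrow$\ref{prop:recursive-feasibility-2}, so that only the second, weaker condition needs handling. This is immediate from the radius bound $r(\bx_i)\geq \varepsilon/L_f$ in \eqref{eq:feasibility-radius}: if $\{\bx_i\}_i$ is an $\tfrac{\varepsilon}{L_f}$-cover then $\mathbb{X}\subseteq\bigcup_i\mathbb{B}(\bx_i,\tfrac{\varepsilon}{L_f})\subseteq\bigcup_i\mathbb{B}\big(\bx_i,r(\bx_i)\big)$, which is exactly condition \ref{prop:recursive-feasibility-2}. Working under condition \ref{prop:recursive-feasibility-2}, fix $\bx\in\mathbb{X}$. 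Each data point is one-step feasible with respect to the conservative problem, since the first optimal control drives $\bx_i$ into $\mathbb{X}_{-\varepsilon}$, i.e. $f(\bx_i,\bu_i)\in\mathbb{X}_{-\varepsilon}$; hence Proposition \ref{prop:local-feasibility} certifies that applying $\bu_i$ at any state in $\mathbb{B}\big(\bx_i,r(\bx_i)\big)\cap\mathbb{X}$ produces a successor in $\mathbb{X}$. The covering hypothesis guarantees $\bx$ lies in at least one such ball, so to finish the single-step claim I must show that the index $\iota$ selected by the $\argmin$ in Definition \ref{def:npp} is itself one whose certified ball contains $\bx$, i.e. $\|\bx-\bx_\iota\|\leq r(\bx_\iota)$.

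The main obstacle is precisely this last point, because the policy is greedy with respect to the cost-regularized surrogate $\bj_i+\lambda\|\bx-\bx_i\|$ rather than pure distance, so $\iota$ need not be the nearest covering center. I would close the gap using Assumption \ref{assn:J-locally-lipschitz}.\ref{assn:J-lip-2}: letting $k$ be an index with $\bx\in\mathbb{B}\big(\bx_k,r(\bx_k)\big)$, the optimality inequality $\bj_\iota+\lambda\|\bx-\bx_\iota\|\leq \bj_k+\lambda\|\bx-\bx_k\|$ combined with the $L_J$-Lipschitz bound $\bj_k\leq \bj_\iota+L_J\|\bx_k-\bx_\iota\|$ and the triangle inequality yields a bound on $\|\bx-\bx_\iota\|$ in terms of $\|\bx-\bx_k\|$ whose constant is controlled by $\lambda$ relative to $L_J$. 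For the prescribed value of $\lambda$ (the precise choice fixed later, which must be large enough) this bound falls within $r(\bx_\iota)$, so Proposition \ref{prop:local-feasibility} applies to $\bu_\iota$ and gives $f(\bx,\bu_\iota)\in\mathbb{X}$. In the limiting nearest-neighbor regime $\lambda\to\infty$ the argument collapses cleanly, since then $\bx_\iota$ is simply the closest center and $\|\bx-\bx_\iota\|\leq \varepsilon/L_f\leq r(\bx_\iota)$; the genuine difficulty is carrying this through at the finite $\lambda$ dictated by the performance analysis, where the interplay of cover granularity, $L_J$, and $\lambda$ is essential.
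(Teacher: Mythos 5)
Your reduction to a one-step claim plus induction, your observation that condition \ref{prop:recursive-feasibility-1} implies condition \ref{prop:recursive-feasibility-2} via $r(\bx_i)\ge\varepsilon/L_f$, and your use of Proposition \ref{prop:local-feasibility} at the covering centers all match the paper's (essentially unstated) argument: the paper gives no explicit proof beyond ``follow naturally,'' implicitly applying local feasibility at whichever data point covers $\bx$. You have correctly identified the subtlety this glosses over, namely that the policy applies $\bu_\iota$ for the cost-regularized $\argmin$ index $\iota$, which need not be a covering center, so Proposition \ref{prop:local-feasibility} does not automatically certify $f(\bx,\bu_\iota)\in\mathbb{X}$.

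However, your proposed patch does not close this gap. Carrying out the computation you sketch: combining $\bj_\iota+\lambda\|\bx-\bx_\iota\|\le\bj_k+\lambda\|\bx-\bx_k\|$ with $\bj_k\le\bj_\iota+L_J\left(\|\bx-\bx_k\|+\|\bx-\bx_\iota\|\right)$ gives $\|\bx-\bx_\iota\|\le\frac{\lambda+L_J}{\lambda-L_J}\,\|\bx-\bx_k\|$. This factor is always strictly greater than $1$, and for the value $\lambda=\frac{1+\gamma L_f}{1-\gamma L_f}L_J$ prescribed in Theorem \ref{thm:pi-eval} it equals $\frac{1}{\gamma L_f}>1$. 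Hence when $\|\bx-\bx_k\|$ is near the cover radius $\varepsilon/L_f$, your bound only yields $\|\bx-\bx_\iota\|\le\varepsilon/(\gamma L_f^2)$, which exceeds the guaranteed radius $r(\bx_\iota)\ge\varepsilon/L_f$, so the selected ball is not certified to contain $\bx$; taking $\lambda$ larger does not help, since the factor never drops below $1$ while $r(\bx_\iota)$ can equal $\varepsilon/L_f$ exactly. To make the step rigorous one must either tighten the hypothesis (e.g.\ require a $\frac{\lambda-L_J}{\lambda+L_J}\cdot\frac{\varepsilon}{L_f}$-cover, which for the prescribed $\lambda$ is a $\gamma\varepsilon$-cover, so that your inequality lands inside $r(\bx_\iota)$) or restrict the $\argmin$ in Definition \ref{def:npp} to indices satisfying $\|\bx-\bx_i\|\le r(\bx_i)$. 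As written, both your argument and the paper's leave this final step open.
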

Note that the former condition may be overly conservative, since it considers the smallest possible feasibility radius for any point in the dataset (like transition $\mathbf{y}\to\mathbf{y}'$ in Figure \ref{fig:2in1}). Since our policy always picks actions in the dataset $\mathcal{D}$, the constraint $\bu_t\in\mathbb{U}$ is guaranteed by design.

\subsection{Bounded suboptimality}

We now turn to bounding the optimality gap of policy $\pi_{\mathcal{D}}$. The key idea developed in this section is that, with a proper choice of regularization $\lambda$ (see Definition \ref{def:npp}), our policy's cost-to-go can be lower bounded. First, recall that under Assumption \ref{assn:J-locally-lipschitz}.\ref{assn:J-lip-2} the cost-to-go for the perturbed problem is $L_J$-Lipschitz:
$
J(\bx_0, \varepsilon) \leq J(\bx_0',\varepsilon)  + L_J \|\bx_0-\bx_0'\|\;.
$
We use this to establish a global upper bound for $J(\cdot,\varepsilon)$, based on the data in $\mathcal{D}$.
\begin{definition}[Nonparametric upper \& lower bounds on J]\label{def:J-ub}
    Let $\mathcal{D}$ be the dataset in \eqref{eq:dataset}. For any $\lambda>0$ define $\jub:\mathbb{X}\to\mathbb{R},~~\jlb:\mathbb{X}\to\mathbb{R}$:
    \begin{align*}
    \jub(\bx) \triangleq \min_{1\leq i \leq |\mathcal{D}|}\big\{\bj_i + \lambda \|\bx-\bx_i\|\big\} &
    &\quad \jlb(\bx) \triangleq \max_{1\leq i \leq |\mathcal{D}|}\big\{\bj_i - \lambda \|\bx-\bx_i\|\big\}
    \end{align*}
\end{definition}
It follows from the Lipschitz condition on $J(\cdot,\varepsilon)$ that $\jlb(\bx) \leq J(\bx,\varepsilon) \leq \jub(\bx)$ for all $\bx\in\mathbb{X}$, as long as $\lambda \geq L_J$.
Our key finding is that, under appropriate choice of $\lambda$, our policy is better than $\jub(\cdot)$, that is: $J^\pi(\bx) \leq \jub(\bx)$.

\begin{theorem}[Policy evaluation inequality]\label{thm:pi-eval}
    Assume we are in any of the conditions of Proposition \ref{prop:recursive-feasibility}, $\gamma L_f < 1$, and that the dataset $\mathcal{D}$ contains \emph{trajectories}, i.e., for any  $(\bx_i,\bu_i)\in\mathcal{D}$, there exists $j\leq|\mathcal{D}| : \bx_j=f(\bx_i,\bu_i)\in\mathcal{D}$. Then:
    \begin{equation}
    \lambda\geq\left(\frac{1+\gamma L_f}{1-\gamma L_f}\right)L_J \implies J^\pi(\bx) \leq \jub(\bx) \quad\forall \bx\in\mathbb{X}.\label{eq:lambda}
    \end{equation}
\end{theorem}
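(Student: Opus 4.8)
The plan is to read $J^\pi$ as the fixed point of the policy Bellman operator $\mathcal{T}^\pi V(\bx) \triangleq c(\bx,\pi(\bx)) + \gamma\,V\!\big(f(\bx,\pi(\bx))\big)$ for $\pi=\pi_{\mathcal{D}}$. This operator is well posed on $\mathbb{X}$: Proposition~\ref{prop:recursive-feasibility} keeps every closed-loop trajectory inside $\mathbb{X}$, and $\gamma L_f<1$ makes $\mathcal{T}^\pi$ a monotone contraction, so its iterates converge to $J^\pi$. I would therefore prove the \emph{supersolution} inequality $\mathcal{T}^\pi\jub(\bx)\le\jub(\bx)$ for every $\bx\in\mathbb{X}$ and then close the argument by monotone iteration, $\jub\ge\mathcal{T}^\pi\jub\ge(\mathcal{T}^\pi)^2\jub\ge\cdots\to J^\pi$, where the tail vanishes by discounting. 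The theorem then reduces entirely to this one-step drift.

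To set up the drift at a fixed $\bx$, let $\iota$ attain the minimum in Definition~\ref{def:J-ub}, so that simultaneously $\jub(\bx)=\bj_\iota+\lambda\|\bx-\bx_\iota\|$ and $\pi(\bx)=\bu_\iota$; this coupling between the bound and the played action is what makes the scheme tractable. Because $\mathcal{D}$ stores full trajectories, the successor $\bx_j\triangleq f(\bx_\iota,\bu_\iota)$ is itself a data point with $\bj_j=J(\bx_j,\varepsilon)$, and the Bellman equation~\eqref{eq:bellman} for the conservative problem gives the exact splitting $\bj_\iota=c(\bx_\iota,\bu_\iota)+\gamma\,\bj_j$. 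Writing $\bx'=f(\bx,\bu_\iota)$, I would upper bound the successor term by evaluating the $\min$ defining $\jub(\bx')$ at the single index $j$, namely $\jub(\bx')\le\bj_j+\lambda\|\bx'-\bx_j\|$, and then invoke Lipschitz dynamics (Assumption~\ref{assn:lipschitz-dynamics}) with the \emph{same} control, $\|\bx'-\bx_j\|\le L_f\|\bx-\bx_\iota\|$. Substituting the splitting and these two bounds, all $\bj_j$ terms cancel and the drift collapses to the scalar requirement $c(\bx,\bu_\iota)-c(\bx_\iota,\bu_\iota)\le\lambda(1-\gamma L_f)\|\bx-\bx_\iota\|$.

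The crux, and the step I expect to be the main obstacle, is controlling this off-dataset stage-cost increment: $\bu_\iota$ is optimal at $\bx_\iota$ but only feasible at the neighboring $\bx$, so the increment cannot be read off the value function as a plain telescoping difference. Here I would combine the optimality identity at $\bx_\iota$, the sub-optimality inequality $J(\bx,\varepsilon)\le c(\bx,\bu_\iota)+\gamma J(\bx',\varepsilon)$, and the $L_J$-Lipschitz bound on $J(\cdot,\varepsilon)$ (Assumption~\ref{assn:J-locally-lipschitz}.\ref{assn:J-lip-2}), propagating the perturbation $\|\bx-\bx_\iota\|$ through one step of the dynamics. This comparison is what produces the extra factor $(1+\gamma L_f)$—one unit from the current state, a $\gamma L_f$ unit from the successor—so that the increment is bounded by $(1+\gamma L_f)L_J\|\bx-\bx_\iota\|$; closing the inequality in the required direction (rather than the reverse, which the Lipschitz estimates give for free) is the delicate point and is precisely what pins down the form of the threshold. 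Matching this with the previous paragraph forces $\lambda(1-\gamma L_f)\ge(1+\gamma L_f)L_J$, i.e.\ exactly~\eqref{eq:lambda}. Recursive feasibility is what keeps the successor bound and $\mathcal{T}^\pi$ well defined, while $\gamma L_f<1$ is what lets the local drift integrate into the global bound $J^\pi\le\jub$.
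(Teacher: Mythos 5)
Your overall architecture is exactly the paper's: establish the supersolution inequality $\mathcal{T}^\pi\jub\le\jub$ pointwise, then invoke monotonicity of $\mathcal{T}^\pi$ and the fixed-point property to conclude $J^\pi\le\jub$. Your bookkeeping is also equivalent to the paper's chain of inequalities: choosing the minimizing index $\iota$, using the trajectory property so that $\bx_j=f(\bx_\iota,\bu_\iota)$ is in $\mathcal{D}$, evaluating $\jub(\bx')$ at that single index, and using $\|\bx'-\bx_j\|\le L_f\|\bx-\bx_\iota\|$; your reduction of the drift to
\begin{equation*}
c(\bx,\bu_\iota)-c(\bx_\iota,\bu_\iota)\;\le\;\lambda(1-\gamma L_f)\,\|\bx-\bx_\iota\|
\end{equation*}
is algebraically correct and reproduces the threshold in \eqref{eq:lambda} once the increment is bounded by $(1+\gamma L_f)L_J\|\bx-\bx_\iota\|$.

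The gap is precisely at the step you yourself flag as the crux, and it is not closed by the tools you list. The suboptimality inequality $J(\bx,\varepsilon)\le c(\bx,\bu_\iota)+\gamma J(\bx',\varepsilon)$ gives a \emph{lower} bound on $c(\bx,\bu_\iota)$; combined with the Bellman identity at $\bx_\iota$ and $L_J$-Lipschitzness of $J(\cdot,\varepsilon)$ it yields only $c(\bx,\bu_\iota)-c(\bx_\iota,\bu_\iota)\ge-(1+\gamma L_f)L_J\|\bx-\bx_\iota\|$, i.e.\ the reverse of what the drift needs. No combination of these ingredients can produce the required \emph{upper} bound, because nothing in the value function controls from above the stage cost of an action that is optimal at $\bx_\iota$ but possibly poor at $\bx$ (if $c$ is not Lipschitz, the increment can be arbitrarily large while $J(\cdot,\varepsilon)$ stays $L_J$-Lipschitz). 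The paper closes this by instead upper-bounding the fixed-action $Q$-value: it writes $c(\bx,\bu_\iota)=Q(\bx,\bu_\iota)-\gamma J(\bx',\varepsilon)$ with $Q(\bx,\bu)\triangleq c(\bx,\bu)+\gamma J(f(\bx,\bu),\varepsilon)$, uses $Q(\bx_\iota,\bu_\iota)=\bj_\iota$ by optimality, and asserts $Q(\bx,\bu_\iota)\le\bj_\iota+L_J\|\bx-\bx_\iota\|$, pairing this with the lower Lipschitz bound on $J(\bx',\varepsilon)$. That $L_J$-Lipschitzness of $\bx\mapsto Q(\bx,\bu_\iota)$ is an extra regularity ingredient (it follows, e.g., from an $L_c$-Lipschitz stage cost with $L_c+\gamma L_J L_f\le L_J$, consistent with the paper's sufficient-condition proposition), and it is the piece your proposal is missing; without it, or an explicit Lipschitz assumption on $c$, the drift inequality cannot be completed in the required direction.
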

\begin{proof}
    The full proof is in Appendix \ref{proof:pi-eval}., here we provide a sketch. The key idea is establishing that
    $
    \left(\mathcal{T}^\pi \jub\right)(\bx) \leq \jub(\bx) \quad\forall \bx\in\mathbb{X},
    $
    where $\mathcal{T}^\pi$ is the Bellman operator under policy $\pi$ \citep[Ch. 1]{bertsekas2011dynamic}. Then, the monotonicity of this operator $\left(J_1 \preceq J_2 \implies \mathcal{T}^\pi J_1\preceq \mathcal{T}^\pi J_2 \right)$, together with the fact that its unique fixed point is $J^\pi$ yields the thesis.
\end{proof}
The preceding theorem establishes that our policy is at least as good as a conservative upper bound on the optimal cost-to-go. We wish to emphasize that $\jub(\cdot)$ is a data-dependent bound that \emph{improves} (i.e. becomes smaller) with more data, which is expected to improve the performance of policy $J^\pi$. Our work builds on \cite{castellano2025nonparametric}, where the authors study unconstrained optimization problems in the context of Reinforcement Learning, adapted here to constrained settings. We close this section by establishing performance guarantees for policy $\pi_{\mathcal{D}}$.
\begin{theorem}[Performance guarantees]\label{thm:performance-guarantees}
Let $\beta>0$ and $0<\eta\ll1$.
Assume policy $\pi_{\mathcal{D}}$ is recursively feasible and that the conditions of Theorem \ref{thm:pi-eval} are satisfied. If the dataset $\mathcal{D}$ has \emph{sufficient coverage}, in the sense that:
$\forall \bx\in\mathbb{X}~~\exists~ \bx_i \in\mathcal{D}: \|\bx-\bx_i\|\leq \frac{\beta}{(2+\beta)\lambda}\left(\bj_i+\eta\right)$
, then:
\begin{equation}    \sup_{\bx\in\mathbb{X}}~\frac{J^\pi(\bx) - J(\bx,\varepsilon)}{J(\bx,\varepsilon)+\eta} \leq \beta\;. \label{eq:relative-performance}
\end{equation}
Moreover, if $\eta>L\varepsilon$ then:
$
    \sup_{\bx\in\mathbb{X}}~\frac{J^\pi(\bx) - J(\bx)}{J(\bx)+\eta} \leq \frac{\beta\eta + L\varepsilon}{\eta - L\varepsilon}\;. \label{eq:true-relative-performance}
$
\end{theorem}
The proof is in Appendix \ref{app:proof-performance-guarantees}. We highlight that \eqref{eq:relative-performance} bounds the \emph{relative suboptimality} of our policy with respect to the optimal solution of the \emph{conservative} problem \eqref{eq:conservative-prob}. By contrast, the next equation quantifies the \emph{relative gap} to the optimal solution of the \emph{original} problem \eqref{eq:og-prob}. 
\revise{\paragraph{On bounding Lipschitz constants} Our nonparametric policy (and the upper- and lower-bounds of $J$) depends on a hyperparameter $\lambda > 0$. Sufficient conditions for a value of $\lambda$ that yields the results of theorems \ref{thm:pi-eval} and \ref{thm:performance-guarantees} in turn rely on the Lipschitz constants $L_f$ and $L_J$ (or at least upper bounds on them). We recognize this as a limitation of our work but also remark that these constants can be learned from trajectories, see e.g. \cite{knuth2021planning}.

}

\section{Algorithms}\label{sec:algorithms}
We now present two algorithms to drive the data-collection process for dataset $\mathcal{D}$. The first one (Algorithm \ref{alg:1}) assumes initial conditions can be sampled uniformly from $\mathbb{X}$. Offline MPC is then run from each sampled point to give an optimal trajectory for \eqref{eq:conservative-prob}. We establish PAC bounds \citep{haussler2018probably} for this algorithm next, since people are fond of them.

\begin{figure}[t]

\refstepcounter{algorithm}
\begin{algobox}{Data collector}\label{alg:1}

\LinesNumbered
\KwIn{Conservative threshold $\varepsilon>0$. Budget $N$.}
\textbf{Initialize:} $\mathcal{D}=\emptyset$\;

\For{$i=1,\ldots,N$}{
    Sample $\bx_i \sim \operatorname{Uniform}(\mathbb{X})$\;
    
    Get trajectory ${\tau}_i=\{\left(\bx_k,\bu_k,\bj_k\right)\}_{k=i}^{T+i-1}$ 
     solving \eqref{eq:conservative-prob} from $\bx_i$, add each transition to $\mathcal{D}.$\;
}

\KwOut{Nonparametric policy $\pi_{\mathcal{D}}$ with feasibility \& performance guarantees (Prop.~\ref{prop:sample-complexity})}

\end{algobox}

\end{figure}




\begin{proposition}[Sample complexity for Algorithm \ref{alg:1}]\label{prop:sample-complexity}
Let $\beta>0$, $\eta>L\varepsilon(1+\tfrac{1}{\beta})$ \revise{and $\lambda>0$ satisfying \eqref{eq:lambda}.} Pick any $\delta\in(0,1)$. Define
$r \triangleq \frac{1}{2}\min\left\{\frac{\eta}{\lambda}\cdot\frac{\beta\eta-L\varepsilon(1+\beta)}{(2+\beta)\eta-L\varepsilon(1+\beta)},\,\frac{\varepsilon}{L_f}\right\}.$ With probability at least $1-\delta$, Algorithm \ref{alg:1} outputs both a recursively feasible and $\beta$-optimal policy over $\mathbb{X}$ after at most:
    $$
    \mathcal{O}\left(\operatorname{N}_{\text{cover}}\left(\mathbb{X}; r\right)\log\operatorname{N}_{\text{cover}}\left(\mathbb{X}; r\right)\log\frac{1}{\delta}\right)
    $$
    iterations, where $\operatorname{N}_{\text{cover}}\left(\mathbb{X}; r\right)$ is the covering number\footnote{Formally defined as  $\min\left\{n>0 : \exists \bx_1,\ldots,\bx_n\in\mathbb{X}: \bigcup_{i\leq n} \mathbb{B}\left(\bx_i, r\right)\supseteq \mathbb{X}\right\}$} of $\mathbb{X}$ with balls of radius $r$, and ``$\beta$-optimal policy'' means $\sup_{\bx\in\mathbb{X}}~\frac{J^\pi(\bx) - J(\bx)}{J(\bx)+\eta} \leq \beta$, i.e., the gap with respect to the original problem \eqref{eq:og-prob}. 
    \label{prop:15}
\end{proposition}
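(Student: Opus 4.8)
The plan is to split the argument into a \emph{deterministic reduction} and a \emph{probabilistic covering bound}. The deterministic part shows that, for the radius $r$ in the statement, whenever the sampled states $\{\bx_i\}$ form a fine enough cover of $\mathbb{X}$ the output policy is simultaneously recursively feasible and $\beta$-optimal; this is what pins down the two terms of the minimum defining $r$. The probabilistic part then bounds, via a coupon-collector argument, how many uniform draws suffice to realize such a cover with probability at least $1-\delta$.

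For the reduction I target sampled states forming a $2r$-cover of $\mathbb{X}$ and impose two requirements. First, for recursive feasibility I invoke condition~\ref{prop:recursive-feasibility-1} of Proposition~\ref{prop:recursive-feasibility}, which needs an $\tfrac{\varepsilon}{L_f}$-cover; this holds as soon as $2r\le\tfrac{\varepsilon}{L_f}$, i.e. $r\le\tfrac{\varepsilon}{2L_f}$, the second term of the minimum. Second, for $\beta$-optimality against the \emph{original} problem I use \eqref{eq:true-relative-performance}: choosing the internal tolerance of Theorem~\ref{thm:performance-guarantees} to be $\beta' = \tfrac{\beta(\eta-L\varepsilon)-L\varepsilon}{\eta}$ makes the right-hand side of \eqref{eq:true-relative-performance} equal to $\beta$ (this requires $\eta>L\varepsilon$ and $\beta(\eta-L\varepsilon)>L\varepsilon$, which is exactly what keeps $r>0$). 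The coverage hypothesis of Theorem~\ref{thm:performance-guarantees} asks for a sample with $\|\bx-\bx_i\|\le\tfrac{\beta'}{(2+\beta')\lambda}(\bj_i+\eta)$; since every $\bj_i\ge 0$, it suffices that the nearest sample lie within $\tfrac{\beta'\eta}{(2+\beta')\lambda}$. Substituting $\beta'$ gives $\tfrac{\beta'\eta}{(2+\beta')\lambda}=\tfrac{(\beta(\eta-L\varepsilon)-L\varepsilon)\eta}{\lambda(2\eta+\beta(\eta-L\varepsilon)-L\varepsilon)}$, which is precisely $2r$ when $r$ equals the first term of the minimum. Hence a $2r$-cover of $\mathbb{X}$ by the samples delivers both properties at once.

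For the probabilistic bound I would partition $\mathbb{X}$ into cells of diameter at most $2r$: take a maximal $r$-packing $\{p_j\}$ (automatically an $r$-cover) and assign each point to its nearest packing center, so each cell $C_j$ satisfies $\mathbb{B}(p_j,r/2)\cap\mathbb{X}\subseteq C_j\subseteq\mathbb{B}(p_j,r)\cap\mathbb{X}$. By covering–packing equivalence the number of cells is $K=\mathcal{O}(M)$ with $M\triangleq\operatorname{N}_{\text{cover}}(\mathbb{X};r)$. Whenever every cell contains a sample, any $\bx\in\mathbb{X}$ shares its cell with some $\bx_i$ at distance $\le 2r$, giving the desired $2r$-cover. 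This is exactly the coupon-collector event, where cell $j$ has uniform probability $q_j=\operatorname{vol}(C_j)/\operatorname{vol}(\mathbb{X})$. A union-plus-tail bound yields $\Pr[\text{some cell empty after } t \text{ draws}]\le K(1-q_{\min})^t\le K\,e^{-q_{\min}t}$, and grouping the draws into $\mathcal{O}(\log\tfrac1\delta)$ epochs of length $\mathcal{O}(\tfrac{1}{q_{\min}}\log K)$, each succeeding with constant probability, produces the product form $\mathcal{O}\!\big(\tfrac{1}{q_{\min}}\log K\cdot\log\tfrac1\delta\big)$.

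The crux, and the main obstacle, is the lower bound $q_{\min}\ge c/M$. Each cell contains a half-ball of volume of order $r^{n}$, while the cover gives $\operatorname{vol}(\mathbb{X})\le K\,V_n r^{n}$; combining these under a geometric regularity (interior-cone / convexity-type) condition on $\mathbb{X}$ yields $q_{\min}\ge \theta\,2^{-n}/K$ for a constant $\theta$ depending only on $\mathbb{X}$ and the dimension. This is the step where compactness alone does not suffice and some boundary regularity of $\mathbb{X}$ must be assumed so that the uniform distribution assigns every cell comparable mass. Substituting $1/q_{\min}=\mathcal{O}(M)$ then gives the advertised $\mathcal{O}\!\big(\operatorname{N}_{\text{cover}}(\mathbb{X};r)\log\operatorname{N}_{\text{cover}}(\mathbb{X};r)\log\tfrac1\delta\big)$. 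I would also remark that the sharper additive coupon-collector rate $\mathcal{O}(M(\log M+\log\tfrac1\delta))$ is subsumed by this product form whenever $\log M,\log\tfrac1\delta\ge 1$.
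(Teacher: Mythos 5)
Your proposal is correct and follows essentially the same route as the paper's proof: the same reduction from the target tolerance $\beta$ on \eqref{eq:true-relative-performance} to the internal tolerance $\beta'=\tfrac{\beta(\eta-L\varepsilon)-L\varepsilon}{\eta}$, the same dropping of the $\bj_i$ term to get a homogeneous radius, the same halving of the radius via the triangle inequality, and the same pairing with the $\tfrac{\varepsilon}{L_f}$-cover for feasibility before taking the minimum. The only difference is that you actually carry out the coupon-collector step (cells from a maximal packing, $q_{\min}\gtrsim 1/M$, epoching to get the $\log\tfrac{1}{\delta}$ factor) and correctly flag that a boundary-regularity condition on $\mathbb{X}$ is needed for the uniform measure to give every cell comparable mass; the paper's proof simply asserts the $\mathcal{O}(M\log M\log\tfrac{1}{\delta})$ bound at that point, so your version is, if anything, more complete.
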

See Appendix \ref{proof:4} for a proof of Proposition \ref{prop:sample-complexity}.
Algorithm \ref{alg:1} enjoys the guarantees above but may be inefficient due to stochastic sampling. \revise{An alternative verification procedure is provided in Algorithm \ref{alg:verification}, which recursively partitions $\mathbb{X}$ into disjoint cells until feasibility and a desired performance is guaranteed. A detailed description can be found in Appendix \ref{appen:c5}.}
\algrenewcommand{\algorithmiccomment}[1]{{\hfill\color{blue}\(\triangleright\) #1}}

\begin{figure}
\centering
\includegraphics[width=\linewidth]{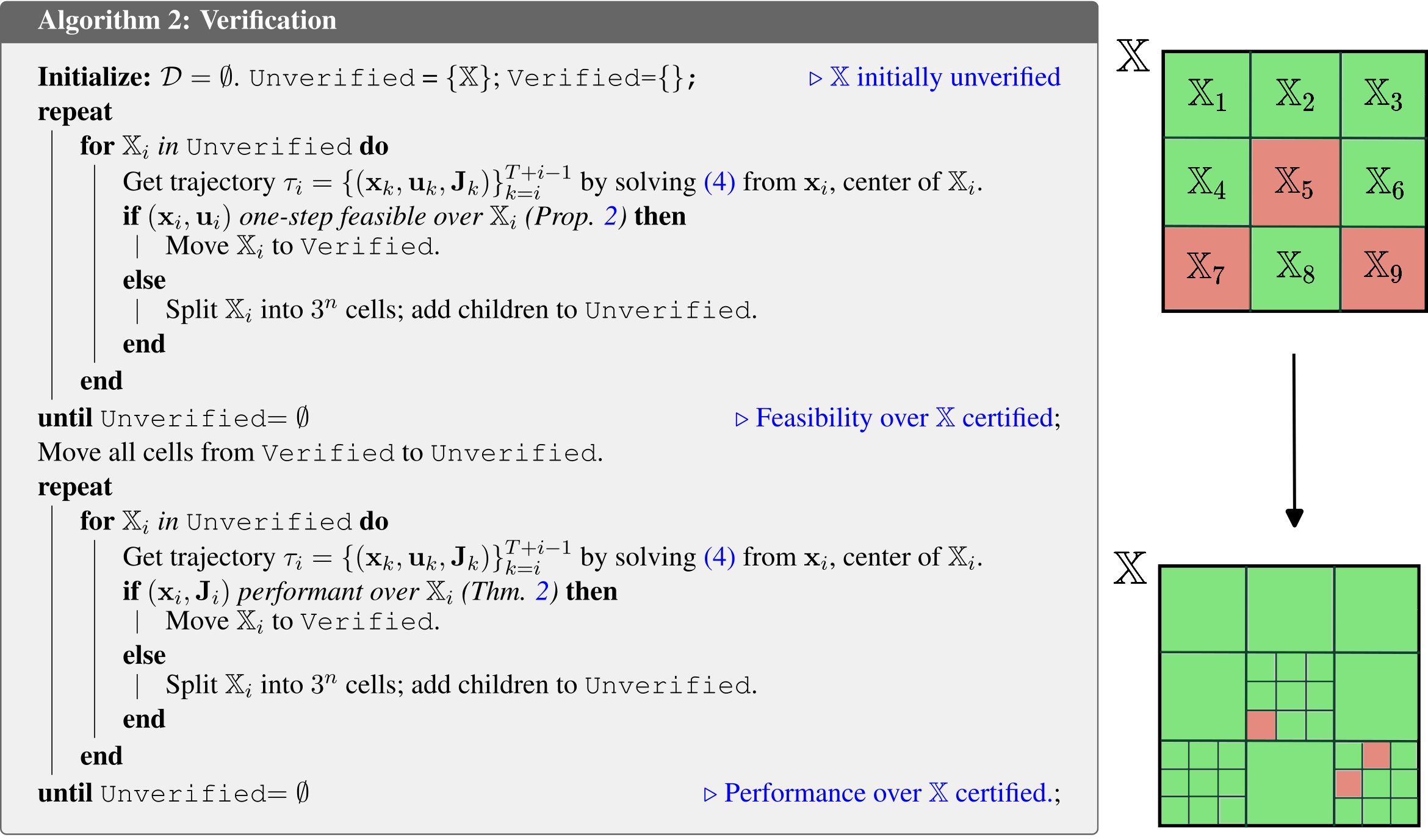}
\caption{Algorithm {\color{blue}2} and the visualization of the cell verification/splitting method. Each cell is tested against two criteria: (i) one-step feasibility and (ii) performance. 
Cells that pass are shown in {\color{green!70!black}{green}} and kept in \texttt{Verified}; those that fail are shown in {\color{red}{red}} and are split into $3^n$ children cells. The children are then re-verified using the same criteria, proceeding recursively.}
\label{alg:verification}
\end{figure}
\RestyleAlgo{boxed}
\RestyleAlgo{ruled}
\LinesNumbered              

\section{Experiments}\label{sec:experiments}
The purpose of the following numerical simulations are two-fold. First, we want to establish a trade-off between the speedup of our method and its performance gap. As one would expect, more data will improve performance but at the cost of higher latency. Second, we show Algorithm \ref{alg:verification} in action, recursively verifying a domain $\mathbb{X}$ by the cell-splitting method. 
\paragraph*{On ``accelerating'' MPC:}
One potential bottleneck of implementing our policy $\pi_{\mathcal{D}}$ is that inference requires querying distances $\|\bx-\bx_i\|$ of a test point $\bx$ to each point $\bx_i$ in the dataset. We use \texttt{FAISS} \citep{johnson2019billion, douze2024faiss}, a GPU-enabled library for fast similarity search, allowing us to enjoy a substantial speed-up with respect to standard MPC\revise{. We run \texttt{FAISS} in CPU-only mode. Instead of fully solving  $\iota(\bx) = \argmin_{1\leq i \leq |\mathcal{D}|}\big\{\bj_i + \lambda \cdot \|\bx-\bx_i\|\big\}$, we invoke \texttt{FAISS} to get the $k$-nearest-neighbors to a query point $\bx$. Call these points $\mathcal{N}_k(\bx)$. Then, the minimization done at each step is approximated by: 
$
\iota(\bx) \approx \argmin_{i:~\bx_i\in\mathcal{N}_k(\bx)}\big\{\bj_i + \lambda \cdot \|\bx-\bx_i\|\big\}\;,
\label{eq:approximate-index}
$
which amounts to taking the smallest value of a $k$-dimensional array. A large enough value of $k$ will ensure that $\mathcal{N}_k(\bx)$ contains the true minimizer, but renders the look-up slower.  We use $k=100$. 


}
\revise{\paragraph*{Hardware/Software Setup} We use \texttt{do-mpc}~\citep{fiedler2023mpc}, an open-source python module for MPC to conduct the experiments. This software interfaces with \texttt{CasADi}~\citep{casadi2019}, which uses a symbolic framework to build each problem, and runs automatic differentiation to obtain trajectories and controls. The optimization solver running in the background is \texttt{IPOPT}~\citep{wachter2006implementation, biegler2009large}, with a tolerance of $10^{-8}$. All experiments are run on a Macbook Air M2 with 8GB of RAM, without GPUs.}

\paragraph*{Trade-offs between latency and performance:} \label{sec:experiments-tradeoff}
We study trade-offs between latency (small computation time) and performance (small optimality gap) on two different benchmarks: an \textbf{inverted pendulum}, in which we wish to stabilize the pendulum near the unstable equilibrium, and a \textbf{minimum time}, unstable LTI system that we wish to drive to the origin with penalized control effort. Due to space limitations, we relegate the details of these environments to Appendix \ref{app:experiments}. For these two environments, we consider different datasets $\mathcal{D}$ obtained by uniformly partitioning the state space into a grid, with $g\in\{3,5,7,9,11\}$ grid points per dimension. Then, \eqref{eq:conservative-prob} is solved for each point in the grid, and horizon $T=100$ trajectories are added to the dataset. The performance of both MPC controllers (with varying lookahead horizon $H\leq T$) and our nonparametric policies (labeled \texttt{MINT}, for \texttt{M}onotonically \texttt{I}mproving bound-based \texttt{N}onparametric policy based on \texttt{T}rajectories) are evaluated on $M=100$ trajectories from random initial conditions. Figure \ref{fig:pendulum-and-min-time}, shows the result of these trajectories. Boxes show the interquartile range (25\% - 75\%) of the empirical distribution over the $M$ trajectories; black lines correspond to the median values, means are shown in green, whiskers extend to 1.5 times the inter-quartile range. Outliers are shown as black circles. The left panel shows the time taken to get one control action (in $ms$), the middle one shows the empirical normalized gap $\tfrac{J^\pi(\bx)-J(\bx)}{J(\bx)}$ for the different controllers. The right panels shows the trade-off between the median normalized gap and the median time taken to get controls. Our nonparametric policy strikes a good balance between good performance (low gap) with lower computation time.

\begin{figure}[t]
    \centering
    \includegraphics[width=\linewidth]{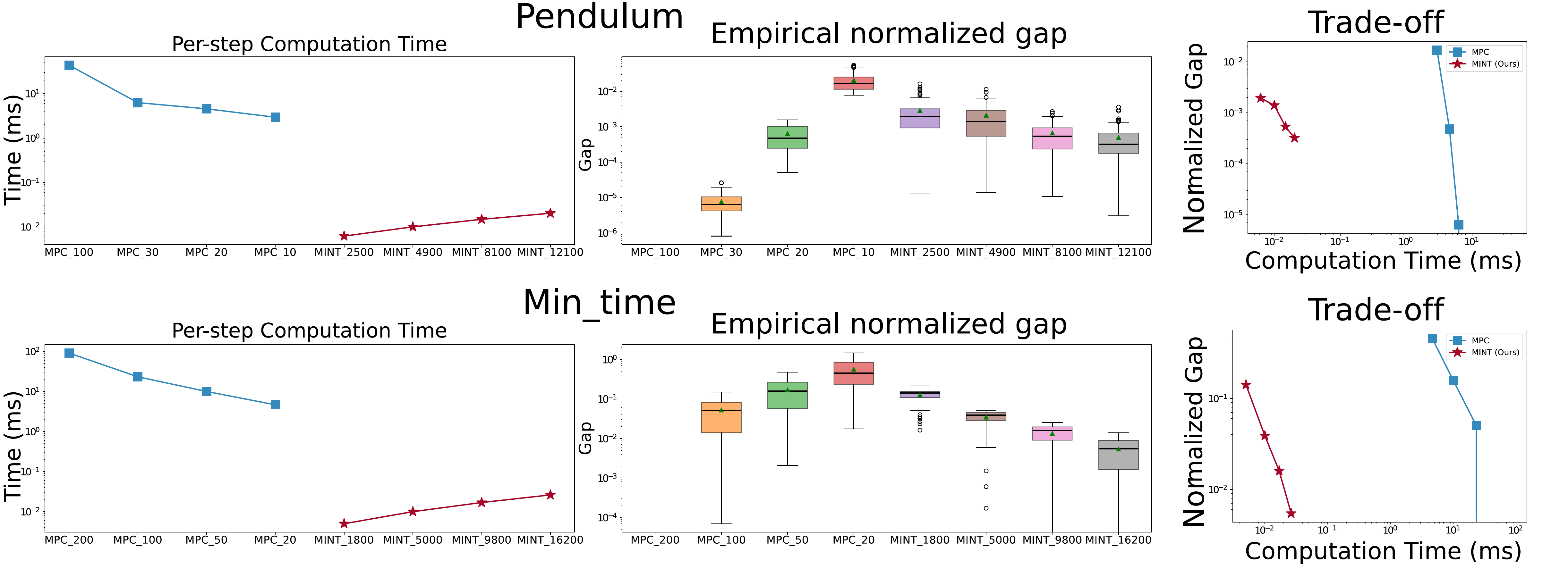}
    \caption{Statistics for the inverted pendulum (top) and minimum time problem (bottom) over 100 trajectories. Left: Per-step latency (in $ms$) for each controller. Our controllers (\texttt{MINT\_{XXXX}}, {\color{red} red $\star$'s}) are ordered left to right from smallest to largest dataset $\mathcal{D}$, MPC controllers ({\color{blue} blue $\square$'s}) ordered from largest to smallest horizon. Middle: Distribution of the relative optimality gap. Boxes correspond to the interquartile range $(25\% - 75\%)$, black line shows the median and the green arrow corresponds to the mean. Right: Trade-off between computation time and relative gap for our the controllers. Our method is substantially faster than MPC and, with sufficient data, outperforms MPC with shorter lookahead horizons.
    }
    \label{fig:pendulum-and-min-time}
\end{figure}

\paragraph*{Verification}\label{sec:experiments-verification}
We test Algorithm \ref{alg:verification} on a discrete LQR problem. Details on the setup and on the hyperparameters of our method are in Appendix \ref{app:experiments}.  Figure \ref{fig:feasibility-experiments} shows Algorithm \ref{alg:verification} in action: it recursively partitions the state space into smaller cells until feasibility can be verified (top row of Fig. \ref{fig:feasibility-experiments}), then it focuses on guaranteeing optimality for each cell---with more splitting if needed (bottom row of Fig. \ref{fig:feasibility-experiments}). Feasibility is harder to verify near the boundary of $\mathbb{X}$ (requiring more splitting), to be expected from our guarantees on a feasibility radius (Prop. \ref{prop:local-feasibility}). A target gap is harder to verify near the origin, since our target gap \eqref{eq:relative-performance} is relative and $J(\bx) \approx 0$ in that vicinity.

\begin{figure}[ht]
\centering
\includegraphics[width=.98\linewidth]{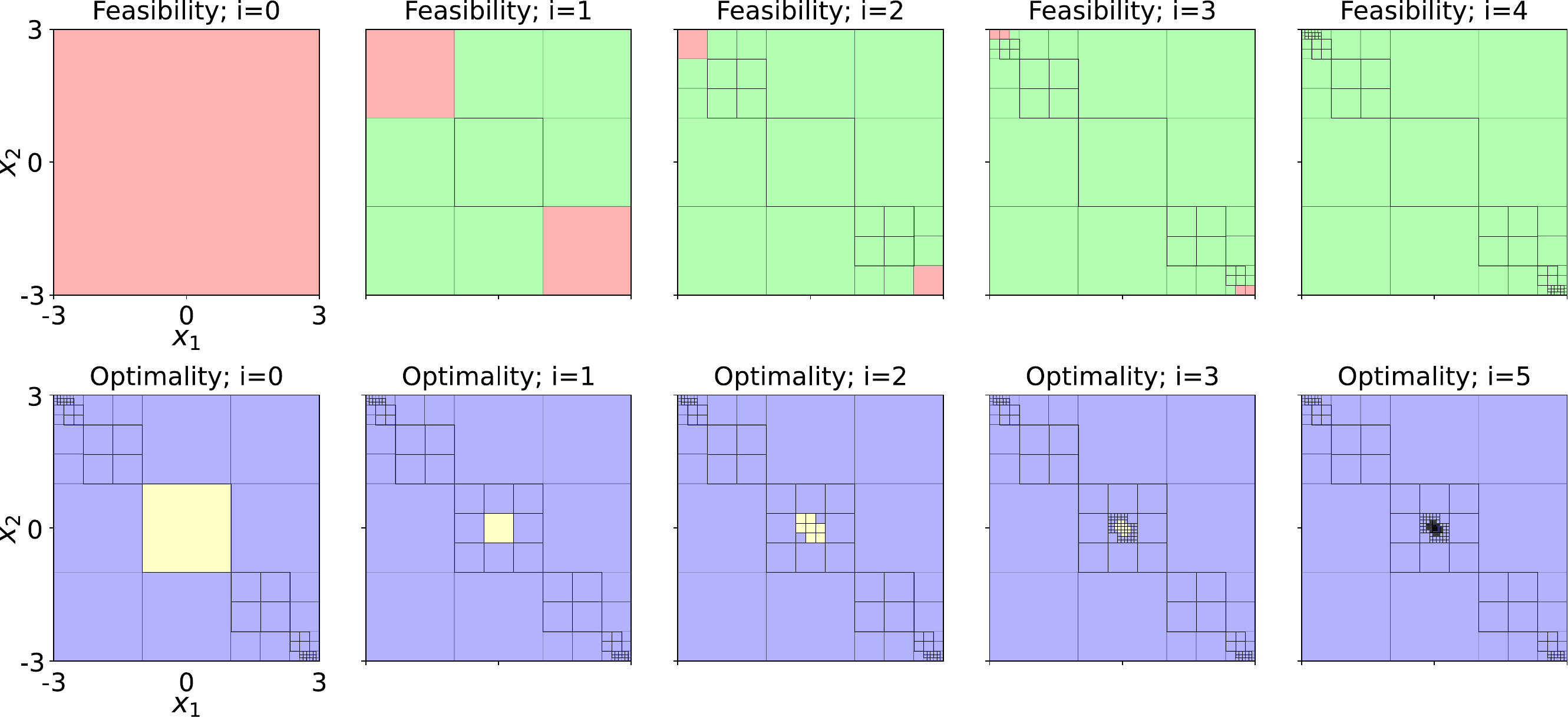}    
    \caption{Algorithm \ref{alg:verification} in action: feasibility (top row) and optimality (bottom row) certificates for LQR, iterations ordered left to right. Top: Shown in {\color{red} red} are cells that don't satisfy the feasibility condition (Prop. \ref{prop:local-feasibility}). The algorithm recursively splits each cell until verification, shown in {\color{green!70!black} green}. Bottom: verification of gap (Theorem \ref{thm:performance-guarantees}). Guaranteed suboptimal cells are shown in {\color{blue}blue}. At termination the algorithm has verified the whole state space $\mathbb{X}$.}
    \label{fig:feasibility-experiments}
\end{figure}

\section{Conclusions \& Future work}
We proposed a nonparametric, data-driven scheme to accelerate MPC by reusing offline-computed trajectories. Our policy picks stored controls by trading off cost-to-go and state proximity. Under mild Lipschitz and coverage assumptions, the controller enjoys recursive feasibility and has explicit performance bounds; with sufficient dataset coverage the relative suboptimality can be made arbitrarily small, establishing an explicit trade-off between desired performance and memory requirements. Inference is extremely fast and we achieve a few orders of magnitude of speed-up with respect to standard MPC. Future work includes testing and scalable implementations for high-dimensional systems and embedded control applications.



\acks{A.C. is grateful to Pedro Izquierdo Lehmann for insightful discussions and suggestions during the first stages of this manuscript. This work was supported by the NSF Global Centers program under Grant No.~2330450 and by the DOE Office of Science (ASCR) under Award No.~826565.}
\clearpage
\bibliography{refs}

\clearpage
\appendix
\section{}
\revise{In what follows we provide some useful inequalities that will aid the proofs of some results in the paper. First, by the Lipschitz assumption on $f(\bx,\bu)$ and the fact that the origin is an equilibrium point, the following holds.:
}


For all $\bx\in\mathbb{X}$ and for all $\bu\in\mathbb{U}$ we have:
\begin{equation}
    \left\|f(\bx,\bu)\right\| \leq L_f\|\bx\|+L_u\|\bu\|\;.\label{eq:bound-dist-to-origin}
\end{equation}    
\begin{remark}[On the need for global Lipschitz constants]
    For the previous result to make sense for any $\bx\in\mathbb{X},\bu\in\mathbb{U}$ $L_f$ and $L_u$ must be global Lipschitz constants, since we are comparing $(\bx,\bu)$ to $(\mathbf{0},\mathbf{0})$.
\end{remark}

\subsection*{Bounding trajectories}
We use $\bx_t=\phi(\bx_0,\bu_{0:t-1})$ to denote the solution at time $t$ starting from $\bx_0$, under control law $\bu_{0:t-1}=[\bu_0, \bu_1,\ldots\bu_{t-1}]$. Similarly, let $\bx'_t=\phi(\bx_0',\bu_{0:t-1}')$. We have the following result:
\begin{equation}
    \left\|\bx_t - \bx_t'\right\| \leq L_f^k \cdot \left\|\bx_0 - \bx_0'\right\| + L_u\sum_{\ell=0}^{t-1}L_f^{t-1-\ell}\left\|\bu_\ell - \bu_\ell'\right\|\quad \forall t\geq 0~.\label{eq:bread-and-butter}
\end{equation}
In particular, for two different states $\bx_0, \bx_0'$ under the same control $\bu_0$:
\begin{equation}
    \left\|\bx_1-\bx_1'\right\| \leq L_f\left\|\bx_0-\bx_0'\right\|\;.
\end{equation}
\section{Proofs}
\subsection{Proposition \ref{prop:local-feasibility}\label{proof:one-step-feasibility}} 

Let $\left(\bx, \bu, \bx'\right)$ be a triplet with $\bx'=f(\bx,\bu)$. We want to study whether applying the same control $\bu$ for a different state $\bx_0$ (close to $\bx$) is feasible.\\
We define the \emph{radius} of the feasible set $\mathbb{X}$ as:
\begin{equation}
    R \triangleq \sup\left\{r\geq 0: \mathbb{B}_{\mathbb{X}}(\mathbf{0}, r) \subseteq \mathbb{X}\right\}\;.
\end{equation}
We will show the following two inequalities (the latter is Proposition \ref{prop:local-feasibility}).
\begin{enumerate}
    \item If $(\bx, \bu)$ is feasible, then $(\bx_0,\bu)$ is feasible for all $\bx_0 \in \mathbb{B}\big(\bx, r(\bx,\bu)\big)$, where:
    \begin{equation}
        r(\bx,\bu) \triangleq \max\left\{0, \frac{R-L_u\|\bu\|}{L_f}-\|\bx\|\right\}
    \end{equation}
    \item If $(\bx,\bu)$ is feasible and $\bx'=f(\bx,\bu)$, then $(\bx_0,\bu)$ is feasible for all $\bx_0 \in \mathbb{B}\big(\bx, r(\bx')\big)$, where:
    \begin{equation}
        r(\bx') \triangleq \frac{R-\|\bx'\|}{L_f} \leq \frac{\operatorname{dist}(\bx',\partial\mathbb{X})}{L_f}
    \end{equation}
\end{enumerate}

\begin{proof}
\begin{enumerate}
    \item Let $\bx'=f(\bx,\bu)$. By virtue of \eqref{eq:bound-dist-to-origin}:
    \begin{equation}
        \|\bx'\| \leq L_f\|\bx\|+L_u\|\bu\|\label{eq:pf-1step-0}
    \end{equation}
    Let $\bx_0\in\mathbb{X}: \|\bx_0-\bx\|\leq r$, and $\bx_0' = f(\bx_0,\bu)$ be the successor state under the same $\bu$. We have:
    \begin{align}
        \|\bx_0'\|-\|\bx'\| &\leq \|\bx_0'-\bx'\| \leq L_f r \implies \label{eq:pf-1step-1}\\
        \|\bx_0'\| &\leq \|\bx'\| + L_f r \leq L_f\left(\|\bx\|+r\right)+L_u\|\bu\|,\label{eq:pf-1step-2}
    \end{align}

where the first inequality in \eqref{eq:pf-1step-1} follows from the reverse triangle inequality, and the second one from \eqref{eq:bread-and-butter} (for two successor states under same control $\bu$). In \eqref{eq:pf-1step-2} we rearrange terms and use \eqref{eq:pf-1step-0}.

Imposing the right hand side of \eqref{eq:pf-1step-2} be smaller than $R$:
\begin{equation}
    L_f\left(\|\bx\|+r\right)+L_u\|\bu\| \leq R \implies r \leq \frac{R-L_u\|\bu\|}{L_f}-\|\bx\|\;,
\end{equation}
as desired. 
\item Re-using the first inequality in \eqref{eq:pf-1step-2}, and imposing that it be upper bounded by $R$:
\begin{equation}
\|\bx_0'\|\leq \|\bx'\|+L_f r \leq R \implies r\leq\frac{R-\|\bx'\|}{L_f}\;.
\end{equation}
For the remaining inequality, note:
\begin{equation}
    \operatorname{dist}\left(\bx',\partial\mathbb{X}\right) \geq  R-\|\bx'\|\quad\forall \bx'\in\mathbb{X}\;.
\end{equation}
\end{enumerate}
\end{proof}



\subsection{Theorem \ref{thm:pi-eval}} \label{proof:pi-eval}
\begin{proof}
Notice that, by assumption, we are in the conditions of Proposition \ref{prop:recursive-feasibility}. This implies policy $\pi_{\mathcal{D}}$ is feasible, and hence:
$$
J^\pi(\bx) < +\infty \quad \forall \bx\in\mathbb{X}.$$
Let $\mathcal{T}^\pi:\mathcal{J}\to\mathcal{J}$ be the Bellman operator \citep[Ch. 1]{bertsekas2011dynamic} of policy $\pi$, mapping costs-to-go $J\in\mathcal{J}$ onto $\mathcal{J}$, defined by:
$$
\left(T^\pi J\right)(\bx) = c(\bx,\pi(\bx))+\gamma \cdot J\big(f(\bx,\pi(\bx))\big)
$$
The following are two well-known facts of $\mathcal{T}^\pi$\citep[Lemma 1.1.1; Prop 1.2.1]{bertsekas2011dynamic}:
\begin{enumerate}
    \item For any policy $\pi$, $\mathcal{T}^\pi$ is monotone, i.e.
    $$J_1(\bx) \leq J_2(\bx) \quad\forall \bx \implies \left(\mathcal{T}^\pi J_1\right)(\bx) \leq \left(\mathcal{T}^\pi J_2\right)(\bx)\quad \forall \bx $$
    \item $J^\pi$ is the unique fixed point of $\mathcal{T}^\pi$:
    $$
    \lim_{k\to\infty}\overbrace{\left(\mathcal{T}^\pi\circ\mathcal{T}^\pi\circ\ldots \mathcal{T}^\pi\right)}^{k~\text{times}}J = J^\pi\quad\forall J\in\mathcal{J}.
    $$
\end{enumerate}
The combination of these two facts leads to the following lemma:
\begin{lemma}
    If $J:\mathbb{X}\to\mathbb{R}$  satisfies:
    $$
    \left(T^\pi J\right)(\bx) \leq J(\bx)\quad\forall \bx\in\mathbb{X},
    $$
    then: 
    $$
    J^\pi(\bx)\leq J(\bx) \quad\forall \bx\in\mathbb{X}\;.
    $$
\end{lemma}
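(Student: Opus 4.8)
The plan is to iterate the hypothesis and pass to the limit, exploiting precisely the two properties of $\mathcal{T}^\pi$ recorded above. Write $J_0 \triangleq J$ and $J_{k+1}\triangleq \mathcal{T}^\pi J_k = (\mathcal{T}^\pi)^{k+1}J$ for the successive iterates, so that the assumed inequality reads $J_1 \leq J_0$ pointwise.

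First I would apply the monotone operator $\mathcal{T}^\pi$ to this inequality. By the monotonicity property (fact 1), $J_1 \leq J_0$ gives $J_2 = \mathcal{T}^\pi J_1 \leq \mathcal{T}^\pi J_0 = J_1$. Proceeding by induction, if $J_{k}\leq J_{k-1}$ pointwise, one further application of the monotone $\mathcal{T}^\pi$ yields $J_{k+1}\leq J_k$. Hence the iterates form a pointwise non-increasing chain $J_0 \geq J_1 \geq J_2 \geq \cdots$, and in particular $J_k(\bx)\leq J_0(\bx) = J(\bx)$ for every $k\geq 0$ and every $\bx\in\mathbb{X}$.

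Second, I would let $k\to\infty$. By the fixed-point property (fact 2), the iterates converge pointwise to the unique fixed point of $\mathcal{T}^\pi$, namely $\lim_{k\to\infty} J_k(\bx) = J^\pi(\bx)$ for all $\bx\in\mathbb{X}$. Since each iterate obeys $J_k(\bx)\leq J(\bx)$ and weak inequalities between reals are preserved under limits, taking $k\to\infty$ gives $J^\pi(\bx)\leq J(\bx)$ for all $\bx\in\mathbb{X}$, which is the claim.

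There is essentially no obstacle here: the result is the standard supersolution bound of dynamic programming, and once monotonicity and convergence of the value-iteration iterates are invoked the argument is immediate. The only points meriting a moment's care are that the inequality survives the pointwise limit (which it does trivially, since the order $\leq$ on $\mathbb{R}$ is closed) and that $J^\pi$ is finite so the limit is well-defined, which follows from the recursive feasibility of $\pi_{\mathcal{D}}$ established at the outset of the proof.
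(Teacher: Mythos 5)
Your proof is correct and is exactly the argument the paper intends: it combines the monotonicity of $\mathcal{T}^\pi$ with the convergence of its iterates to the unique fixed point $J^\pi$, producing a non-increasing chain $J \geq \mathcal{T}^\pi J \geq (\mathcal{T}^\pi)^2 J \geq \cdots$ whose pointwise limit is $J^\pi$. The paper merely asserts that "the combination of these two facts leads to the lemma" without writing out the induction and limit passage, so your proposal is a faithful (and slightly more explicit) rendering of the same standard supersolution argument.
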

Then, to prove Theorem \ref{thm:pi-eval}, all that remains is  to show $\jub$ satisfies the hypothesis of the lemma above.\\
Fix $\bx\in\mathbb{X}$ and let
$$
\bu_i = \pi(\bx),
$$
where $i$ is the solution of: 
$$
\argmin_{1\leq i \leq |\mathcal{D}|}\big\{\bj_i + \lambda \cdot \|\bx-\bx_i\|\big\}\;.
$$
Define as well:
$$
\bx_i'=f(\bx_i, \bu_i),\quad \bx'=f(\bx,\bu_i),
$$
and the action-value function \citep{bertsekas2011dynamic} $Q:\mathbb{X}\times\mathbb{U}\to\mathbb{R}\cup\{+\infty\}$:
\begin{align*}
    Q(\bx_0, \bu_0) \triangleq  \min_{\bu_{1:T-1}}&\sum_{t=0}^{T-1}\gamma^t c(\bx_t, \bu_t) + F(\bx_T)\\ 
    \text{subject to:}~~& \bx_{t+1}=f(\bx_t, \bu_t),& t=0,\ldots,T-1\\
    &\bx_t\in\mathbb{X}_{-\varepsilon}, & t=1,\ldots,T-1\\
    & \bu_t\in\mathbb{U}, & t=0,\ldots,T-1
\end{align*}
which is the optimal cost-to-go of the original problem fixing both $\bx_0$ and $\bu_0$. Note that $Q(\cdot,\cdot)$ satisfies the following Bellman equation:
\begin{equation}
    Q(\bx_0,\bu_0) = c(\bx_0,\bu_0) + \gamma J(f(\bx_0,\bu_0))\quad\forall (\bx_0, \bu_0) \in \mathbb{X}\times\mathbb{U}\;, \label{eq:q-bellman-eq}
\end{equation}
and furthermore:
$$
Q(\bx, \bu) \geq J(\bx)\quad\forall \bx\in\mathbb{X}\;,\quad\quad\min_{\bu\in\mathbb{U}}Q(\bx, \bu) = J(\bx)\quad\forall \bx\in\mathbb{X}\;.
$$
In particular, with the definitions above, since $(\bx_i,\bu_i)$ comes as an optimal tuple in the dataset $\mathcal{D}$ we have:
$$
Q(\bx_i,\bu_i)=J(\bx_i)\;.
$$

Then, the following string of (in)equalities hold, as explained below
\begin{align}
    \mathcal{T}^\pi\jub(\bx) - \jub(\bx) &= {\color{red}c(\bx, \bu_i)} + \gamma \jub(\bx') - \jub(\bx) \\
    & = {\color{red}Q(\bx,\bu_i) - \gamma J(\bx')} + \gamma \jub(\bx') - {\color{blue}\jub(\bx)} \\
    & = {\color{green!50!black}Q(\bx,\bu_i)} - \gamma J(\bx') + \gamma \jub(\bx') - {\color{blue}\bj_i - \lambda\|\bx-\bx_i\|} \\
    &\leq {\color{green!50!black}\bj_i + L_J\|\bx-\bx_i\|} + \gamma{\color{purple}\left(\jub(\bx')-J(\bx')\right)}-\bj_i -\lambda \|\bx-\bx_i\| \\
    &\leq (L_J-\lambda)\|\bx-\bx_i\| + \gamma{\color{purple}\left(\bj_j + \lambda\|\bx'-\bx_j\|-\bj_j+ L_J\|\bx'-\bx_j\|\right)}\\
    &= (L_J-\lambda)\|\bx-\bx_i\| + \gamma(\lambda+L_J)\|\bx'-\bx_j\|\\
    &\leq (L_J-\lambda)\|\bx-\bx_i\| + \gamma (\lambda+L_J)L_f\|\bx-\bx_i\|\\
    &\leq 0 \iff
\end{align}
\begin{align}
&\iff (L_J-\lambda) + \gamma (\lambda+L_J)L_f \leq 0 \\
&\iff L_J + \gamma L_J L_f \leq \lambda (1-\gamma L_f) \\
& \iff \lambda \geq \frac{1+\gamma L_f}{1 - \gamma L_f}L_J \;,
\end{align}
\revise{where {\color{red}the red identity} follows from \eqref{eq:q-bellman-eq}, {\color{blue} the blue identity} is the definition of $\jub(\cdot)$, {\color{green!50!black} the green inequality} follows from an upper bound on $Q(\cdot,\cdot)$. Letting $\bx_j=f(\bx_i,\bu_i)$, which belongs to $\mathcal{D}$ under the hypothesis that it contains trajectories, {\color{purple}the second inequality} follows from the definition of $\jub(\bx)$ and a lower bound on $J(\bx')$, using $\bx_j$ in the norms (a valid upper bound, even if index $j$ does not minimize $\jub(\bx')$). Then, the Lipschitz continuity of $f$:
$$
\left\|\bx'-\bx_j\right\| = \left\|f(\bx,\bu_i)-f(\bx_i,\bu_i)\right\|
\leq L_f\|\bx-\bx_i\|
$$ gives the third inequality.}
\end{proof}

\subsection{Theorem \ref{thm:performance-guarantees}} \label{app:proof-performance-guarantees}
\begin{proof}
Note that the expression \eqref{eq:relative-performance} we want to show is equivalent to:
\begin{equation}(1+\beta)\,(J(\bx,\varepsilon)+\eta) \;\ge\; J^\pi(\bx)+\eta. \quad\forall\bx\in\mathbb{X}\;.\label{eq:thm2-prop-subopt}
\end{equation}
Recall as well (Def. \ref{def:J-ub}) that $\jlb(\cdot)$ is a lower bound of $J(\bx,\varepsilon)$:
$$
J(\bx,\varepsilon) \geq \jlb(\bx) = \max_{1\leq i \leq |\mathcal{D}|}\big\{\bj_i+\eta-\lambda\|\bx-\bx_i\|\big\},$$
and that $\jub(\bx)\geq J^\pi(\bx)$ since we are in the conditions of Theorem \ref{thm:pi-eval}. Therefore, a sufficient condition for \eqref{eq:thm2-prop-subopt} is the inequality highlighted in {\color{red}red} below.

\begin{equation}
    (1+\beta)\left(J(\bx,\varepsilon)+\eta\right) \overset{\text{(Def.~\ref{def:J-ub})}}{\geq} \left(1+\beta\right)\left(\jlb(\bx)+\eta\right)~{\color{red}\geq}~\jub(\bx)+\eta \overset{\text{(Thm.~\ref{thm:pi-eval})}}{\geq} J^\pi(\bx)+\eta
\end{equation}
Fix $\bx\in\mathbb{X}$ and let $\bx_i$ be the state in the dataset that maximizes $\jlb(\bx)$. Note that we can further upper bound $\jub(\bx)$ using $\bx_i$:
$$
\jub(\bx) = \min_{1\leq k\leq|\mathcal{D}|} \left\{\bj_k + \lambda\|\bx-\bx_k\| \right\} \leq \bj_i+\lambda\|\bx-\bx_i\|\;.$$
Our sufficient condition then becomes:
\begin{align}
    \left(1+\beta\right)\left(\jlb(\bx)+\eta\right) &= \left(1+\beta\right)\left(\bj_i+\eta-\lambda\|\bx-\bx_i\|\right)~{\color{red}\geq}~\bj_i + \lambda\|\bx-\bx_i\|+\eta \implies
\end{align}
\begin{equation}
  \frac{\beta}{(2+\beta)\lambda}(\bj_i+\eta) \geq \|\bx-\bx_i\|\;, \label{eq:towards-err}
\end{equation}
which is \eqref{eq:relative-performance}, completing the first half of our proof.
To show (\ref{eq:true-relative-performance}), we have the following decomposition:
\begin{align*}\sup_{\bx\in\mathbb{X}}~\frac{J^\pi(\bx) - J(\bx)}{J(\bx)+\eta}&=\sup_{\bx\in\mathbb{X}}\left\{\frac{J^\pi(\bx) - J(\bx,\varepsilon)}{J(\bx,\varepsilon)+\eta}\frac{J(\bx,\varepsilon)+\eta}{J(\bx)+\eta}+\frac{J(\bx,\varepsilon)-J(\bx)}{J(\bx)+\eta}\right\}\\
&=\sup_{\bx\in\mathbb{X}}\left\{\left(\frac{J^\pi(\bx) - J(\bx,\varepsilon)}{J(\bx,\varepsilon)+\eta}+1\right)\frac{J(\bx,\varepsilon)+\eta}{J(\bx)+\eta}-1\right\}\\
&\overset{(i)}{\leq}\sup_{\bx\in\mathbb{X}}\left\{\left(\sup_{\bx\in\mathbb{X}}\left\{\frac{J^\pi(\bx) - J(\bx,\varepsilon)}{J(\bx,\varepsilon)+\eta}\right\}+1\right)\frac{J(\bx,\varepsilon)+\eta}{J(\bx)+\eta}-1\right\}\\
&\overset{(ii)}{\leq}\sup_{\bx\in\mathbb{X}}\left\{\left(\beta+1\right)\frac{J(\bx,\varepsilon)+\eta}{J(\bx)+\eta}-1\right\}\\
&=\left(\beta+1\right)\sup_{\bx\in\mathbb{X}}\left\{\frac{J(\bx,\varepsilon)+\eta}{J(\bx)+\eta}\right\}-1\\
&\overset{(iii)}{\leq}\left(\beta+1\right)\sup_{\bx\in\mathbb{X}}\left\{\frac{J(\bx,\varepsilon)+\eta}{J(\bx,\varepsilon)+\eta-L\varepsilon}\right\}-1\\
&=\left(\beta+1\right)\left(1+\sup_{\bx\in\mathbb{X}}\left\{\frac{L\varepsilon}{J(\bx,\varepsilon)+\eta-L\varepsilon}\right\}\right)-1\\
&=(\beta+1)\left(1+\frac{L\varepsilon}{\eta-L\varepsilon}\right)-1 \\
&=\frac{(\beta+1)\eta-\eta+L\varepsilon}{\eta-L\varepsilon}\\
&=\frac{\beta\eta-L\varepsilon}{\eta-L\varepsilon}
\end{align*}
where $(i)$ follows from the nonnegativity 
$\frac{J^\pi(\bx)-J(\bx,\varepsilon)}{J(\bx,\varepsilon)+\eta}\ge 0$ and 
$\frac{J(\bx,\varepsilon)+\eta}{J(\bx)+\eta}\geq 1$. 
Using \eqref{eq:relative-performance} yields $(ii)$. 
For $(iii)$, if $\varepsilon$ is small enough such that $J(\bx,\varepsilon)+\eta-L\varepsilon>0~\forall\bx\in\mathbb{X}$, we invoke the first item of Assumption~\ref{assn:J-locally-lipschitz} 
to lower bound the denominator in $\frac{J(\bx,\varepsilon)+\eta}{J(\bx)+\eta}$. 
Rearranging terms, we finish the proof.
\end{proof}
\clearpage
\subsection{Proposition \ref{prop:sample-complexity}}\label{proof:4}
\begin{proof}
We split the proof in two parts.
\paragraph{1. A $2r-$cover of $\mathbb{X}$ yields a recursively feasible and $\beta-$optimal policy.}
The radius $r$ in the proposition satisfies:
$$2r = \min\left\{\frac{\eta}{\lambda}\cdot\frac{\beta\eta-L\varepsilon(1+\beta)}{(2+\beta)\eta-L\varepsilon(1+\beta)},\,\frac{\varepsilon}{L_f}\right\}.$$
First, note that if $\bigcup_{i=1}^{|\mathcal{D}|}\mathbb{B}(\bx_i,2r) \supseteq \mathbb{X}$, then policy $\pi_{\mathcal{D}}$ is recursively feasible: this is because $\forall \bx, \exists \bx_i : \|\bx-\bx_i\|\leq 2r \leq \tfrac{\varepsilon}{L_f}$ and we invoke the result of Proposition \ref{prop:recursive-feasibility}.

We now show that a $2r-$cover also implies that $\pi_{\mathcal{D}}$ is $\beta-$optimal. Recall the result of Theorem \ref{thm:performance-guarantees}: if the dataset $\mathcal{D}$ has sufficient coverage, in the sense that for any $\bx\in\mathbb{X}$ there exists $\bx_i\in\mathcal{D}$ such that:
\begin{align}
\|\bx-\bx_i\|\leq\frac{\beta'}{(2+\beta')\lambda}(\bj_i+\eta)&\implies \label{eq:pf-thm-2}\\    
&\implies\sup_{\bx\in\mathbb{X}}\frac{J^\pi(\bx)-J(\bx,\varepsilon)}{J(\bx,\varepsilon)+\eta}\leq\beta'\\
&\implies \sup_{\bx\in\mathbb{X}}~\frac{J^\pi(\bx) - J(\bx)}{J(\bx)+\eta} \leq\frac{\beta'\eta+L\varepsilon}{\eta-L\varepsilon}
\end{align} 
where $\beta' > 0$ and $\eta > L\varepsilon$. We want to use the smallest $\beta'$ such that the right-most term is upper-bounded by $\beta$, i.e.:

\begin{align}
\frac{\beta'\eta + L\varepsilon}{\eta - L\varepsilon}&\leq \beta \implies
\beta'\leq\frac{\beta(\eta - L\varepsilon)-L\varepsilon}{\eta}\label{eq:150}\;.
\end{align}
We plug this value of $\beta'$ back in \eqref{eq:pf-thm-2} and upper bound the right hand side using the fact that $\bj_i\geq 0$:
$$
\|\bx-\bx_i\|\leq\frac{\beta'}{(2+\beta')\lambda}
$$
that is to say:
\begin{align}
\|\bx-\bx_i\|&\leq \frac{\frac{\beta(\eta - L\varepsilon)-L\varepsilon}{\eta}}{\lambda\left(2+\frac{\beta(\eta - L\varepsilon)-L\varepsilon}{\eta}\right)}\eta\nonumber\\&= \frac{(\beta(\eta - L\varepsilon)-L\varepsilon)\eta}{\lambda\left(2\eta+\beta(\eta - L\varepsilon)-L\varepsilon\right)},\label{eq:160}\\
&= \frac{\eta}{\lambda}\cdot\frac{\beta\eta-L\varepsilon(1+\beta)}{(2+\beta)\eta-L\varepsilon(1+\beta)}\;.
\end{align}
This finishes the first part of the proof.
\paragraph{2. Sample complexity of Algorithm \ref{alg:1} to achieve a 2r-cover of $\mathbb{X}$}
In the prequel we showed that a $2r-$cover of $\mathbb{X}$ yields a recursively feasible and $\beta-$optimal policy. Now the question is: if we sample $\{\bx_i\}_{i=1}^n$ uniformly from $\mathbb{X}$, what is the sample complexity of $n$ to get a $2r-$cover?

Consider an arbitrary $r-$cover of $\mathbb{X}$, i.e. center points 
$\{\by_1,\ldots,\by_m\}$ such that:
$$
\bigcup_{k=1}^m \mathbb{B}(\by_k, r) \supseteq \mathbb{X}\;.
$$
Such a cover exists because $\mathbb{X} $ is compact, and $m\approx \operatorname{N}_{\text{cover}}\left(\mathbb{X}; r\right)$. Let's call $B_k \triangleq \mathbb{B}(\by_k,r), k=1,\ldots m$. Observe that, if the following condition holds:
\begin{equation}
\forall B_k~\exists \bx_i \in \mathcal{D} : \bx_i \in B_k  \label{eq:1-sample-per-ball}
\end{equation}
we can conclude
$$
\|\bx-\bx_i \|\leq 2r~\forall \bx\in\mathbb{X},
$$
that is to say: if there is at least one sample $\bx_i$ \emph{in every} ball $B_k$, then $\{\bx_i\}_{i=1}^n$ form a $2r-$cover of $\mathbb{X}$. Indeed, by the triangle inequality:
\begin{align}
\|\bx-\bx_i\|\leq\|\bx-\by_k\|+\|\bx_i-\by_k\|\leq 2r
\end{align}
Therefore, the sample complexity of Algorithm \ref{alg:1} is upper-bounded by the sample complexity of guaranteeing \eqref{eq:1-sample-per-ball}.  Given samples $\bx_i\overset{iid}{\sim} \operatorname{Unif}(\mathbb{X})$, observe that: 
$$\mathbb{P}(\bx_i\in B_k)\approx \frac{1}{m}=\frac{1}{\operatorname{N}_{\text{cover}}\left(\mathbb{X}; r\right)}\;.$$
Now, fix $k\in\{1,\ldots, m\}$ and assume Algorithm \ref{alg:1} has been run for $n$ rounds.
\begin{align}
    \mathbb{P}\left(\nexists i\in\{1,\ldots,n\}: \bx_i  \in B_k \right) \lesssim \left(\frac{1}{\operatorname{N}_{\text{cover}}\left(\mathbb{X}; r\right)}\right)^n \leq \exp\left(-\frac{n}{\operatorname{N}_{\text{cover}}\left(\mathbb{X}; r\right)}\right)
\end{align}
We then apply a union bound over $k$ and upper bound that probability by $\delta\in(0, 1)$:
\begin{align}
    \mathbb{P}\left(\text{some ball~}B_k\text{~not covered by any~}\bx_i\right) &= \mathbb{P}\left(\bigcup_{k=1}^m\left\{\nexists i\in\{1,\ldots,n\}: \bx_i  \in B_k\right\}\right) \\
    &\leq \sum_{i=1}^m \mathbb{P}\left(\nexists i\in\{1,\ldots,n\}: \bx_i  \in B_k \right) \\
    &\lesssim {\operatorname{N}_{\text{cover}}\left(\mathbb{X}; r\right)}\exp\left(-\frac{n}{\operatorname{N}_{\text{cover}}\left(\mathbb{X}; r\right)}\right)\\
    &\leq \delta \implies
\end{align}
\begin{align}
    n \geq \operatorname{N}_{\text{cover}}\left(\mathbb{X}; r\right) \log\left(\operatorname{N}_{\text{cover}}\left(\mathbb{X}; r\right)\right) \log\left(\frac{1}{\delta}\right)\;,
\end{align}
as desired.

\end{proof}
\section{Considerations on Algorithm \ref{alg:2}}\label{appen:c5}

\begin{algorithm}[ht]
\caption{Local Adaptive Data collector} \label{alg:2}
\KwIn{$\varepsilon>0$, $T>0$. Initial covering radius $h_0$. Desired gap $\beta$. Slack $\eta$.}

\KwSty{Initialize:} $\mathcal{D}=\emptyset$. $\mathrm{Tree}(0)=\mathbb{X}$.
$I=\mathbf{0}_{1\times\left\lceil\log_{3}\left(\frac{h_0 L_f}{\varepsilon}\right)\right\rceil}.$

Sample $\bx$ through the uniform grid $\{\mathbb{X}^{1,1},\mathbb{X}^{1,1},\mathbb{X}^{1,2},...,\mathbb{X}^{1,N_0}\}=\mathcal{H}_0$ with radius $h_0$.

Get trajectory ${\tau}_{1,i}=\{\left(\bx_{1,m},\bu_{1,m}, \bj_{1,m}\right)\}_{m=i}^{T+i-1}$ by solving \eqref{eq:conservative-prob} from $\bx_{1,i}\in\mathbb{X}^{1,i}\subseteq\mathcal{H}_0$.
    
$\mathcal{D} \gets \mathcal{D} \cup \{\left(\bx_{1,m},\bu_{1,m}, \bj_{1,m}\right)\}$~\textbf{for}~$(\bx_{1,m},\bu_{1,m}, \bj_{1,m})$ \textbf{in}~${\tau}_{1,i}$. $I_1\gets N_0$. $\mathrm{Tree}(1)\gets\mathcal{H}_0$. $t\gets1$.

   \While{$\exists \,\mathbb{X}^{k,j}\subseteq\mathrm{Tree}(t)$ infeasible,}{
   
   \For{$\mathbb{X}^{k,j}\subseteq\mathrm{Tree}(t)$}{
   
   Check the one-step feasibility (Equation (\ref{eq:feasibility-radius}), Theorem \ref{prop:local-feasibility}).
   
   \If{$\mathbb{X}^{k,j}$ one-step infeasible,}{

\SetKwBlock{BlkOne}{\textbf{Block 1:}}{\textbf{End Block 1}}

    \BlkOne{Split the cell into the uniform grid $\{\mathbb{X}^{k+1,I_{k+1}+1},...,\mathbb{X}^{k+1,I_{k+1}+3^n}\}=\mathcal{H}_{k,j}$.

   $\mathrm{Tree}(t)\gets\mathrm{Tree}(t)\cup\mathcal{H}_{k,j}/\mathbb{X}^{k,j}$. $I_{k+1}=I_{k+1}+3^n$.
   
   \For{$\mathbb{X}^{k+1,i}\subseteq\mathcal{H}_{k,j}$}{
   Sample $\bx_{k+1,i}$ as the central point of $\mathbb{X}^{k+1,i}$.

Get trajectory ${\tau}_{k+1,i}=\{\left(\bx_{k+1,m},\bu_{k+1,m}, \bj_{k+1,m}\right)\}_{m=i}^{T+i-1}$ by solving \eqref{eq:conservative-prob} from $\bx_{k+1,i}\in\mathbb{X}^{k+1,i}\subseteq\mathcal{H}_{k,j}$.

    $\mathcal{D} \gets \mathcal{D} \cup \{\left(\bx_{k+1,m},\bu_{k+1,m}, \bj_{k+1,m}\right)\}$~\textbf{for}~$(\bx_{k+1,m},\bu_{k+1,m}, \bj_{k+1,m})$ \textbf{in}~${\tau}_{k+1,i}$.}

  }}}

   $\mathrm{Tree}(t+1)\gets\mathrm{Tree}(t)$. $t\gets t+1$.
   
   }
   
   \While{$N\geq K$,}{
   
   Compute the number of cells $\mathbb{X}^{k,j}\subseteq\mathrm{Tree}(t)$ not $\beta$-optimal as $N_{\mathrm{Tree}(t)}$. 

   \eIf{$3^n N_{\mathrm{Tree}(t)}\leq N-K$,}
   {\For{$\mathbb{X}^{k,j}\subseteq\mathrm{Tree}(t)$}{

   Check the $\beta$-optimality (Last equation in Theorem \ref{thm:performance-guarantees}).
   
   \If{$\mathbb{X}^{k,j}$ not $\beta$-optimal policy,}{Run \textbf{Block 1}. $K\gets K+3^n$.}
   
   \eIf{$\exists \,\mathbb{X}^{k,j}\subseteq\mathrm{Tree}(t)$ not $\beta$-optimal,}{$\mathrm{Tree}(t+1)\gets\mathrm{Tree}(t)$. $t\gets t+1$.}{\textbf{End All}}
   }}{
   
   Run \textbf{Block 1} for $\lfloor \frac{N-K}{3^n}\rfloor$ cells
   
   $K\gets K+\left(\lfloor \frac{N-K}{3^n}\rfloor+1\right)3^n$.}
   }
\end{algorithm}
\clearpage
\begin{assumption}[Regularity of $\mathbb{X}$ and $\|\cdot\|_{\mathbb{X}}$]
$\mathbb{X}$ is a hypercube containing the origin and $\|\cdot\|_{\mathbb{X}}=\|\cdot\|_{\infty}$.\label{assu:15}
\end{assumption}
Algorithm \ref{alg:2} mainly follows the idea of Neyman allocation \citep{garivier2016optimal,adusumilli2022neyman,fogliato2024framework} and deterministic non-parametric optimization \citep{munos2011optimistic,kuzborskij2020locally,sinclair2020adaptive}. We construct a refinement tree whose nodes are axis–aligned hypercubes ($\ell_{\infty}$-balls) 
\(\mathbb{X}^{k,j}\subseteq\mathbb{X}\), 
where \(k\) denotes the tree layer (depth) and \(j\) indexes the nodes at the layer $k$ of the current $\mathrm{Tree}(t)$.
A layer-\(k\) hypercube has radius \(h_k\) (side length \(2h_k\)). 
Splitting replaces a node by \(3^n\) children, each of radius \(h_{k+1}=h_k/3\). This splitting rule follows the logic of \citet[Algorithm 2]{siegelmann2025stability}, we split each edge into $\frac{1}{3}$ because the central point still remains a central point in the new subgrid $\mathcal{H}_{k,j}$\footnote{For example, if we split each edge of the cell $\mathbb{X}^{k,j}$ by half, then the central point, $\bx_{k,j}$, is not usable for any children cell $\mathbb{X}^{k+1,i}$ in the new subgrid $\mathcal{H}_{k,j}$.}. Specifically, a list $I=\left\{I_1,I_2,...,I_{\left\lceil\log_{3}\left(\frac{h_0 L_f}{\varepsilon}\right)\right\rceil}\right\}$ is defined initial to keep on track of the number of leaves in different layer $k$ in current $\mathrm{Tree}(t)$. The length of list is $\left\lceil\log_{3}\left(\frac{h_0 L_f} {\varepsilon}\right)\right\rceil$ because $h_0$ can split at most this time to reach a sufficient small radius $\frac{\varepsilon}{L_f}$ in Proposition \ref{prop:recursive-feasibility}. The Algorithm~\ref{alg:2} proceeds in two stages:

(a) we first construct a nonparametric policy that guarantees one-step feasibility;

(b) given a budget \(N\), we adaptively split cells to achieve \(\beta\)-optimality relative to the best achievable performance.

\emph{Initialization.} 
We initialize with a uniform cover $\mathcal{H}_0=\{\mathbb{X}^{1,i}\}_{i=1}^{N_0}$ of common radius $h_0$ and cardinality $N_0$. 
For each cell $\mathbb{X}^{k,j}$ with center $\bx_{k,j}$, we evaluate the dynamics along the horizon $m=i,\dots,i+T-1$ by solving~\eqref{eq:conservative-prob}, obtaining the control inputs $\bu_{k,m}$ and successor states $\bj_{k,m}$. 
The resulting optimal trajectory is
$\tau_{k,i}=\big\{(\bx_{k,m},\,\bu_{k,m},\,\bj_{k,m})\big\}_{m=i}^{T+i-1}$.

\emph{Acceptance/Refinement rule.}
A cell $\mathbb{X}^{k,j}$ is \emph{accepted} if
(i) it is one-step feasible for stage (a), or
(ii) it is $\beta$-optimal policy for stage (b) at its center-trajectory
$\tau_{k,i}=\{(\bx_{k,m},\bu_{k,m},\bj_{k,m})\}_{m=i}^{i+T-1}$, respectively.
If both (i) and (ii) hold simultaneously, the cell $\mathbb{X}^{k,j}$ is
\emph{permanently kept}.
Otherwise, $\mathbb{X}^{k,j}$ is \emph{refined}:
we split it into with subgrid $\mathcal{H}_{k,j}$ with $3^n$ uniform children of radius $h_{k+1}=h_k/3$ and replace
the parent in the current tree $\mathrm{Tree}(t)$ by these children (\textbf{Block~1}). The central trajectory $\tau_{k+1,i}$ for each child cell $\mathbb{X}_{k+1,i}$ in $\mathcal{H}_{k,j}$ is computed and collected in the data set $\mathcal{D}$.

In stage (b), whenever a cell is refined, we update the evaluation
budget by $K \gets K + 3^n$.

\emph{Iteration and stopping.}
At iteration $t$, we traverse all current leaves of $\mathrm{Tree}(t)$ and apply the acceptance/refinement rule to each leaf; the current iteration $t$ ends once all leaves have been processed.
The procedure terminates in stage (a) as soon as the resulting tree is one-step feasible.
For stage (b), the algorithm terminates when either the evaluation budget is exhausted, i.e., $N-K<3^n$, or every leaf in current $\mathrm{Tree}(t)$ is already $\beta$-optimal policy.

In stage (b), if the remaining budget $N-K$ is insufficient to refine all non–$\beta$-optimal leaves in $\mathrm{Tree}(t)$, we refine only 
$\left\lfloor \frac{N-K}{3^n} \right\rfloor$ leaves—specifically, those with the largest relative error bounds.
\begin{remark}[Cover a regular domain (Assumption \ref{assu:15}).] \label{rem:4}In Assumption \ref{assu:15}, we assume $\mathbb{X}$ is a regular hypercube to ensure a uniform grid to cover $\mathbb{X}$ without overlapping or omission. This assumption here is just to ease the clarfication of our presentation on Algorithm \ref{alg:2}. The Algorithm \ref{alg:2} could be extend to any shape of $\mathbb{X}$ easily by consider a feasible outer cover $\mathbb{X}_0$ ($\forall \bx\in\mathbb{X}_0,\,J(\bx,\varepsilon)<+\infty$) constructed by non-overlapping same radius $\ell_{\infty}$-balls.
\end{remark}

\section{Details on the experiments}\label{app:experiments}
\subsection{Inverted Pendulum}
We consider the inverted pendulum with angle $x_1$ and angular velocity $x_2$, with equations of motion given by:
\begin{equation}
\dot\bx = \frac{d}{dt}\begin{bmatrix}x_1\\x_2\end{bmatrix} = \begin{bmatrix}x_2\\-\frac{g}{l}\sin x_1\end{bmatrix} + \frac{1}{ml^2}\begin{bmatrix}0\\1\end{bmatrix}u \;,\label{eq:pendulum-dynamics}
\end{equation}
where $u\in\mathbb{R}$ is the scalar torque applied to the axis of the pendulum. We consider the discrete-time dynamics of \eqref{eq:pendulum-dynamics} with sampling time $\delta t=0.05s$, we let $\bu_t\in\mathbb{U}=[-5, 5], m=1kg, l=1m, g=9.82 \tfrac{m}{s^2}$. The stage cost is given by:
$$
c(\bx_t,\bu_t) = \bx_t^\top \begin{bmatrix}1&0\\0&0.1\end{bmatrix}\bx_t + 0.01\|\bu_t\|^2.$$
We consider a time horizon $T=100$, and obtain different datasets $\mathcal{D}$ by performing a uniform grid of the state space $\mathbb{X} = [-2, 2]^2$ with $G\in\{5, 7, 9,11\}$ points per dimension, and run offline MPC to get full length trajectories starting from those points. 

\subsection{Minimum time with control regularization}
\begin{equation}
\dot\bx = \frac{d}{dt}\begin{bmatrix}x_1\\x_2\end{bmatrix} = \begin{bmatrix}0.2 & 1\\ 0 & 0\end{bmatrix}\begin{bmatrix}x_1\\x_2\end{bmatrix} + \begin{bmatrix}0\\1\end{bmatrix}u \label{eq:min-time-dynamics}
\end{equation}
We consider the discrete-time version of \eqref{eq:min-time-dynamics} with sampling time $\delta t=0.1s$, stage cost:
\begin{equation}
    c(\bx_t,\bu_t) = 1+ 10\cdot\|\bu_t\|\;.
\end{equation}
and terminal constraint $\bx_T = 0$. Even though this cost functions are not captured by our theory above, we want to show that our policy still achieves good performance in this case. We consider the discrete-time dynamics of \eqref{eq:min-time-dynamics} with sampling time $\delta t=0.1s$, horizon $T=200$ and do uniform gridding of the state space $\mathbb{X} = [-2, 2]^2$ with the same grids as for the pendulum, $\mathbb{U}=[-1, 1]$.

\subsection{Verification on constrained LQR}
\begin{equation}
    A = \begin{bmatrix}
        1 & 0.1 \\ 
        0 & 1
    \end{bmatrix};\quad\quad B = \begin{bmatrix}0.15\\1\end{bmatrix},
\end{equation}
We consider stage cost $c(\bx,\bu) = \|\bx\|^2 + \|\bu\|^2$, constraint sets $\mathbb{X}=[-3, 3]^2, \mathbb{U} = [-2, 2]$ and a horizon $T=10$. For our policy we used $L_f=1.1, \lambda=1, \beta=5, \eta=0.01$. This parameters were chosen in an ad-hoc manner to get a convergent result in $5$ iterations or less---both for feasibility and performance---so that Figure \ref{fig:feasibility-experiments} was illustrative. With tighter requirements (e.g. $\beta=0.1, \eta=1e-6$) Algorithm \ref{alg:verification} necessitates many more iterations.
\end{document}